\newtheorem{algo}{Algorithm}
\newtheorem{assumption}{Assumption}
\newtheorem{theorem}{Theorem}
\begin{document}

	\def\spacingset#1{\renewcommand{\baselinestretch}%
		{#1}\small\normalsize} \spacingset{1}

	
\title{\bf Mahalanobis balancing: a multivariate perspective on approximate covariate balancing}
  \author{Yimin Dai \hspace{.2cm}\\
    School of Mathematics, Sun Yat-sen University, Guangzhou, China, 510275\\
    and \\
    Ying Yan \\
    School of Mathematics, Sun Yat-sen University, Guangzhou, China, 510275\\
	}
\date{}
  \maketitle

	\bigskip
	\begin{abstract}
		In the past decade,  various  exact  balancing-based weighting methods were  introduced to the causal inference literature.  It  eliminates   covariate imbalance by imposing  balancing constraints in a certain  optimization problem, which can  nevertheless be infeasible when there is bad overlap between the covariate distributions in the treated and control groups or when the covariates are high-dimensional.  Recently, approximate balancing was proposed as an alternative balancing framework. It resolves the feasibility issue by using inequality moment constraints instead. However, it can be difficult to select the threshold parameters. Moreover, moment constraints may not fully capture the discrepancy of covariate distributions. In this paper, we propose Mahalanobis balancing to approximately balance covariate distributions from a multivariate perspective. We use a  quadratic constraint  to control  overall imbalance with a single threshold parameter, which can be tuned by  a simple  selection procedure. We show that the dual problem of Mahalanobis balancing is an $\ell_2$ norm-based regularized regression problem, and  establish interesting  connection  to  propensity score models. We derive asymptotic properties, discuss the high-dimensional scenario, and make extensive numerical comparisons with existing balancing methods.
	\end{abstract}
	
	\noindent%
	{\it Keywords:}  causal inference,  Mahalanobis distance, multivariate imbalance, overlap, propensity score
	\vfill
	
	\newpage
	\spacingset{1.5} 
	
	\section{Introduction}
	\label{sec1}

	Inference about causation gains increasing attention in medical science, economics,  computer science, and many other disciplines. Propensity score \citep{rosenbaum1983central}, the probability of receiving a treatment conditional on the  covariates,  plays a central role in causal inference for observational studies.
	There are several classes of propensity score-based methods in practitioners'  toolkit, including matching, weighting, and subclassification. We focus on  weighting in this article \citep{rosenbaum1987model,robins2000marginal,hirano2003efficient}. Inverse probability weighting and its doubly robust version are perhaps  the most commonly used weighting methods. They  explicitly estimate the propensity score via a parametric or nonparametric model \citep[e.g.][]{hirano2003efficient}. However, as demonstrated by \cite{kang2007demystifying} and  related articles, inverse probability weighting is  prone to misspecification of the propensity score model, and covariate balance may not be attained after reweighting.
	
	In the past decade, numerous robust weighting methods  have been proposed, which  aim to    directly balance covariates in the estimation procedure.  \cite{hainmueller2012entropy} introduced the entropy balancing method, which optimizes the entropy loss function subject to a set of  balancing constraints. The balancing constraints enforce  the weighted moments in the control group to be equal to the unweighted counterparts in the treated group, and thus   finite-sample  {exact balance} is achieved.  The  balanced weights are then used  to reweight the  subjects in the control group to produce a weighted estimator of the average treatment effect on the treated. This method has close connection to    survey sampling, missing data, and machine learning  \citep[e.g.][]{dehejia1999causal,Graham2012,mohri2018foundations}. \cite{zhao2016entropy} revealed that entropy balancing is doubly robust.   Recently, entropy balancing  has received great attention in applied areas. For example, it was  adapted to integrate randomized clinical trial and observational study \citep{lee2023improving}, to transport trial results to a target population \citep{Josey2021b}, and so forth. \cite{imai2014covariate} considered  parametric  propensity score model   and solved the likelihood score function together with the  balancing constraints simultaneously to produce covariate balancing propensity score. \cite{fan2022optimal}  improved the covariate balancing propensity score method. \cite{chan2016globally} constructed a class of calibration weighting estimators by minimizing a  distance measure subject to balancing  constraints.  Empirical likelihood and exponential tilting belong  to this class.  \cite{YiuSu2018} proposed to eliminate  the association between treatment and covariates in the weight construction procedure.  \cite{Hazlett2020} recommended to construct balancing constraints from the kernel viewpoint. \cite{Josey2021} proposed a Bregman distance framework which unifies many existing methods.
	
	A common feature of the aforementioned balancing methods is that  the balancing conditions are directly imposed as {equality moment constraints}  in a  convex optimization problem. We call them  \textit{exact balancing} methods in this paper.  Compared to inverse probability weighting, exact balancing is capable of producing stable weights and is more robust to model misspecification  in many circumstances. Nevertheless, exact balancing performs poorly in the bad-overlap scenario  where
	there is limited overlap between the covariate distributions in the treated and control groups, or in the high-dimensional scenario where the number of constraints is large. Even worse,  the optimization problem may be infeasible in these difficult scenarios. We refer readers to a discussion of convex hull problems in \cite{Hayakawa2023}, which implies  infeasibility of exact balancing in the high-dimensional scenario.
	
	To alleviate the infeasibility problem,  \cite{zubizarreta2015stable} and \cite{wang2020minimal} proposed  an \textit{approximate balancing} framework, i.e., the  minimal dispersion {approximately balancing} weights (MDABW) method, which replaces
	the exact balancing conditions with  less restrictive   {inequality constraints}. The inequality constraints are obtained by  controlling  univariate  dispersion of each covariate. \cite{wang2020minimal} showed that  the  dual  problem of  MDABW  is  a weighted $\ell_1$ norm-based  regularized regression problem. There are several other robust balancing methods which may not suffer the infeasibility problem. For example, \cite{wong2018kernel} directly minimized covariate functional imbalance. \cite{Zhao2019} proposed a general balancing framework using  tailored loss functions.  \cite{LiMorganZaslavsky2018} and \cite{MaWang2020} provided  insights about robust weighting from different perspectives.

	In general, covariate balancing becomes much more difficult  in the high-dimensional scenario, because the  overlap condition tends to be  more restrictive as the dimensionality increases. We refer the readers to \cite{d2021overlap} for a formal discussion of the  overlap condition in the high-dimensional scenario.  Regularization   plays a key role  in high-dimensional analysis, and it has been adapted to  covariate balancing  recently. Among others, \cite{athey2016approximate,ning2020robust,tan2020model,tan2020regularized} addressed  high-dimensional balancing problems using  regularization techniques. We call them    high-dimensional regularized balancing methods. Many  of these  methods involved outcome models and thus cannot be categorized as   data preprocessing procedures. Interestingly,  \cite{athey2016approximate,ning2020robust,tan2020regularized}  established  tight connections  of their methods to the  MDABW  method.

	Now we discuss  potential limitations of  the MDABW method as an approximate balancing framework. The MDABW method controls the  weighted  absolute standardized mean difference (ASMD)  for each covariate with a threshold parameter, where the ASMD is a  univariate  imbalance measure frequently used in observational studies.   {Approximate balance} is achieved for each covariate separately by controlling the  threshold parameters, and  thus   the MDABW method is a \textit{univariate approximate balancing} framework. However,   it can be difficult to select the threshold parameters simultaneously. For example,  an attempt to reduce imbalance of some covariate by shrinking the corresponding threshold parameter may reversely  enlarge imbalance of other covariates. This dilemma is not uncommon in  the bad-overlap  scenario. Second, even  if univariate approximate balancing is achieved  for all  constraints separately,  the  overall imbalance  is not necessarily under control.  Moreover, the MDABW method requires  that the number  of  constraints is much smaller than the sample size. Our numerical studies demonstrate that the MDABW method may perform unsatisfactorily in the bad-overlap and  high-dimensional scenarios.
	
	The limitation of  univariate approximate balancing   motivates the proposed multivariate approximate balancing method in this paper. We show that  univariate approximate balancing  can be greatly improved by monitoring and controlling  overall covariate balance from the multivariate perspective.  The concept of multivariate imbalance measure   is widely adopted  in   the  matching literature to quantify  the discrepancy  of the covariate distributions
	\citep[e.g.][]{Iacus2011,Iacus2012,DiamondSekhon2013,imbens2015causal}. However,   it  is  largely neglected in the covariate balancing literature. In this paper, we highlight  the importance of controlling multivariate imbalance.
	
	The main contribution of this paper is that we propose  Mahalanobis balancing as a \textit{multivariate approximate balancing} framework. We use a generalized version of Mahalanobis distance to measure   multivariate imbalance of the covariates between groups. A sufficiently small value of the  multivariate imbalance measure suggests that the covariates are approximately balanced from the multivariate perspective. The multivariate balancing condition is encoded as  a  single quadratic inequality constraint in a convex optimization problem with a pre-specified loss function (e.g.,  entropy, empirical likelihood).  We study the primal optimization problem  by utilizing the Fenchel duality theory \citep[e.g.][]{Bertsekas2016,mohri2018foundations,MordukhovichNam2022}. We show that the dual problem is an unconstrained regularized regression problem with an   $\ell_2$ norm  regularizer. Off-the-shelf convex optimization algorithms can be employed to solve the dual problem. We adopt  the BFGS   quasi-Newton algorithm in this paper.

	Mahalanobis balancing differs from the MDABW method by that   multivariate imbalance, rather than univariate imbalance, is controlled. This simple idea has important consequences that make Mahalanobis balancing a highly competitive balancing method.  First,   a univariate  threshold parameter is attached to the quadratic inequality constraint in Mahalanobis balancing, and thus the  issue of tuning  multiple or high-dimensional threshold parameters is resolved. Second, there are situations where univariate approximate balancing is achieved but the covariate distributions remain   imbalanced after reweighting (e.g., Scenarios B, C, E in Section \ref{sec4}). The MDABW method does not perform satisfactorily in these situations, and   treatment effect estimation can be considerably biased. It suggests that  controlling univariate imbalance  in reweighting can be insufficient to delineate multivariate imbalance and obtain valid treatment effect estimation.   In comparison,   Mahalanobis balancing   greatly improves the  performance of univariate approximate balancing in these situations, which confirms the importance of  controlling  multivariate imbalance. Third,  Mahalanobis balancing is not restricted to the low-dimensional situation.  In the Supplementary Material, we propose a high-dimensional version of  Mahalanobis balancing where the idea of  multivariate imbalance is further exploited. The simulations  reveal that it  is competitive to  the   high-dimensional regularized balancing methods \citep{athey2016approximate,ning2020robust,tan2020regularized}. In comparison, the MDABW method is restricted to the low-dimensional scenario.

	The rest of the paper is organized as follows. In Section \ref{sec2}, we introduce necessary notations,  existing imbalance measures, and balancing methods. In Section \ref{sec3}, we present the proposed Mahalanobis balancing method, and discuss the high-dimensional situation. Asymptotic properties of   Mahalanobis balancing  are studied. In Section \ref{sec4}, we make extensive comparison of   Mahalanobis balancing  to  the existing  balancing methods in numerical studies. We draw conclusion in the last section. Extended discussion of the high-dimensional setting,  proofs, and additional numerical results are available in the online Supplementary Material. An R package of  Mahalanobis balancing and numerical examples  are available  at Github  via the  link:
\verb"https://github.com/yimindai0521/ACBalancing/".

	\section{Notation and Preliminary}\label{sec2}
	\subsection{The framework}\label{sec2.1}
	Consider a simple random sample of  $n$ subjects from a population. Let $T$ be a binary treatment indicator with $T=1$ if the subject is treated and $T=0$ otherwise. Let $X=(X_{1}, \cdots, X_{p})^\top$ be a $p$-dimensional  vector of pre-treatment covariates. Let $Y$ be the univariate outcome.  The observed data are $\left\{\left({X}_{i}, T_{i}, Y_{i}\right): i=1, \cdots, n\right\}$, which are $n$ independent and identically distributed copies of the triplet $(X, T, Y)$. Under the potential outcome framework \citep[e.g.][]{imbens2015causal}, we define a pair of potential outcomes  $\left\{Y(0), Y(1)\right\}$ for each subject if he were not treated or treated. The observed outcome  is $Y=\left(1-T\right) Y(0)+T Y(1).$
	
	We impose the following strong ignorability and overlap assumptions for identification and inference of the average treatment effect.
	
	\begin{assumption}[Strong Ignorability]
		$\{Y(0), Y(1)\} \perp T \mid {X}$.
	\end{assumption}
	
	Here,  $\perp$ denotes independence. Assumption 1 states that the potential outcomes $\{Y(0), Y(1)\}$ are independent of the treatment indicator $T$ given pre-treatment covariates ${X}$, which implies that there are no unmeasured confounders. The probability   $\pi({X})=\text{Pr}(T=1 \mid {X})$  is the  propensity score function. Assumption 1   implies $\pi({X})$ is a balancing score \citep{rosenbaum1983central} in the sense that
	$\{Y(0), Y(1)\} \perp T \mid \pi({X})$.  Under Assumption 1, the average treatment effect is identifiable.
	
	Moreover, to ensure that there are informative observations for treatment effect estimation, we impose the overlap assumption:
	
	\begin{assumption}[Overlap]
		$0<\text{Pr}(T=1 \mid {X})<1\text { for any } {X}$ in its support.
	\end{assumption}

	In this paper, the causal estimand of interest is  the average treatment effect (ATE), defined by $\tau=E\{Y(1)-Y(0)\}=\mu_{1}-\mu_{0}$, where $\mu_{j}=E\{Y(j)\}$, $j=0,1$.  The proposed method  can be  adapted to other causal estimands, e.g., average treatment effect on the treated (ATT).   Under Assumption 1, $\tau=E\{\mu_{1}({X})-\mu_{0}({X})\}$, where  $\mu_{j}({X}) = E\{Y(j) \mid X\}$,  $j=0,1$.
	
	A generic reweighting scheme constructs weights ${w}=\left(w_{1}, \cdots, w_{n}\right)^\top$ by exploiting the sampled data and possibly external information from the population, and then estimates the ATE by  $\hat{\tau}=\sum_{i=1}^{n} T_{i} w_{i} Y_{i}-\sum_{i=1}^{n}\left(1-T_{i}\right) w_{i} Y_{i}$.  The  covariate balancing methods introduced in Section \ref{sec1}  explicitly balance the covariates  in the reweighting procedure and produce a set of balanced weights.
	
	\subsection{Imbalance measures}\label{sec2.2}
	A set of pre-specified basis functions, denoted by $\phi_1(X),\cdots, \phi_K(X)$,  can be used instead  of the original covariates $X$ in the construction of balanced weights \citep[e.g.,][]{hainmueller2012entropy,imai2014covariate,wang2020minimal}. In Section \ref{sec3}, we discuss the choice of the basis functions.  Define $\Phi(X)=(\phi_1(X),\cdots, \phi_K(X))^\top$ to be a vector of  basis functions, which is an $\mathbb{R}^p\rightarrow \mathbb{R}^K$ feature mapping.
	
	Assessment of  covariate balance is crucial in the inference of causal effects. Many imbalance measures  are proposed and widely used in the literature.
	The absolute standardized mean difference (ASMD) is the most popular univariate imbalance measure \citep[e.g.,][]{imbens2015causal}. The ASMD for the $k$th basis function $\phi_k(X)$ is expressed as
	\begin{equation}
		\text{ASMD}_k=\frac{\left|\bar{\phi}_{1,k}-\bar{\phi}_{0,k}\right|}
		{\sqrt{\left(s_{1,k}^{2}+s_{0,k}^{2}\right) / 2}}, \ \ k=1,\cdots, K,
		\label{3}
	\end{equation}
	where $\bar{\phi}_{j,k}=\sum_{i=1}^{n}I(T_i=j)  \phi_k(X_{i})/n_j$  and $s_{j,k}^{2}=\sum_{i=1}^{n}I(T_i=j) (\phi_k(X_{i})-\bar{\phi}_{j,k})^2/(n_{j}-1)$ are the treatment-specific sample mean and sample variance for  $\phi_k$, $j=0,1$. Here, $I(\cdot)$ is the indicator function, and  $n_1$ and $n_0$ are the sample sizes of the treated and control groups, respectively.
	When the ASMD for a  basis function does not exceed a pre-specified threshold parameter $\delta$, then this basis function is considered to be approximately balanced. Common choices of $\delta$ are 0.1, 0.2 and  0.25 \citep[e.g.][]{Stuart2013,imbens2015causal,Xie2019}. However,  even when all basis functions are  approximately balanced in the sense that the average of $\text{ASMD}_1,\cdots, \text{ASMD}_K$ is smaller than the threshold, say $\delta=0.2$,  the difference in the outcomes as an unadjusted average treatment effect estimation  can be  still  severely biased. This can be explained by that the ASMD is  a univariate imbalance measure which does not fully characterize    multivariate imbalance.
	
	The targeted absolute standardized mean difference (TASMD), introduced by  \cite{chattopadhyay2020balancing}, is defined as
    $$
    \operatorname{TASMD}_{j,k}=\frac{\left|\bar{\phi}_{j,k}-\bar{\phi}_{k}\right|}{s_{k}},
    $$
    where  $\bar{\phi}_{k} = \sum_{i=1}^{n}\phi_k(X_{i})/n$ and $s_{k}^{2}=\sum_{i=1}^{n}(\phi_k(X_{i})-\bar{\phi}_{k})^2/(n-1)$, $j =0,1; k=1,\cdots, K$.  The ASMD  quantifies the disparity between the treated and control groups, whereas  the TASMD measure quantifies the disparity of each group in relation to a target population. In practical applications, we can employ the treatment-adjusted subgroup mean difference  method to evaluate balance in relation to various target populations. Examples include the population of treated units when estimating the ATT, the entire population when estimating the ATE, or a specific population defined by its observed covariates when estimating the conditional average treatment effect (CATE).

	Multivariate  imbalance measure can  be used to assess the imbalance for all covariates or basis functions simultaneously.  The squared Mahalanobis distance (MD) \citep{imbens2015causal}  is a popular multivariate imbalance measure, given by
	\begin{equation*}
		\text{MD}=(\bar{\Phi}_{1}-\bar{\Phi}_{0})^{\top}\left( \frac{\hat{\Sigma}_1+\hat{\Sigma}_0}{2}\right)^{-1}(\bar{\Phi}_{1}-\bar{\Phi}_{0}),
		\label{4}
	\end{equation*}
	where $\bar{\Phi}_{j}=(\bar{\phi}_{j,1},\cdots, \bar{\phi}_{j,K})^\top$, $j=0,1$, and $\hat{\Sigma}_1$ and $\hat{\Sigma}_0$ are the sample covariance matrices  of $\Phi(X)$ in the treated  and control groups, respectively.  The MD was used in Mahalanobis distance matching \citep[e.g.][]{imbens2015causal} to determine the closeness of subjects between the treated and control groups. Moreover, it was adapted to  genetic matching  \citep{DiamondSekhon2013}, which aims at minimizing a generalized version of MD to optimize postmatching covariate balance. The $\ell_1$ distance-based measure proposed by \cite{Iacus2011} is another example of multivariate imbalance measure, which motivates a  class of matching methods with the monotonic imbalance bounding property. \cite{Iacus2012}  derived  coarsened exact matching   from this class, and emphasized the importance of optimizing multivariate balance in matching. \cite{zhu2018kernel} proposed to use the Kernel distance as a multivariate imbalance measure. \cite{HulinMak2020} studied  energy distance-based covariate balancing.

	\subsection{Exact balancing v.s. approximate balancing}\label{sec2.3}
	
	The existing covariate balancing methods  achieve  exact balance by imposing a set of equality balancing constraints
	\begin{equation}
		\begin{aligned}\label{equalityconstraints}
			\sum_{i=1}^{n} w_{i} T_{i}  \phi_k(X_{i})&= \bar{\phi}_k,\\
			\text{and}\  \ \sum_{i=1}^{n} w_{i} (1-T_{i})  \phi_k(X_{i})&= \bar{\phi}_k,\ \ k=1,\cdots, K,
		\end{aligned}
	\end{equation}
	in the  construction of the weights $w_1,\cdots, w_n$, where $\bar{\phi}_k=\sum_{i=1}^{n}  \phi_k(X_{i})/n$. These constraints  force  the first moment of
	the basis functions to be exactly balanced after reweighting. Examples of  exact balancing methods include entropy balancing \citep{hainmueller2012entropy}, covariate balancing propensity score \citep{imai2014covariate}, and calibration weighting \citep{chan2016globally}, among others.

	Define a weighted version of the ASMD by
	\begin{equation*}
		\text{ASMD}_{k}^w=\frac{\left|\sum_{i=1}^{n} w_{i} T_{i}  \phi_k(X_{i})-\sum_{i=1}^{n} w_{i} (1-T_{i})  \phi_k(X_{i})\right|}{\sqrt{\left(s_{1,k}^{2}+s_{0,k}^{2}\right) / 2}}. \ \ k=1,\cdots, K\label{wASMD}
	\end{equation*}
	The $\text{ASMD}_{k}^w$ is a weighted univariate imbalance measure that quantifies  remaining imbalance for the $k$th basis function $\phi_k$ after reweighting. It  is used in this paper to compare the univariate balancing performance of  weighting methods. Usually, we normalize the weights for each group such that $\sum_{i=1}^n T_i w_i=1$ and  $\sum_{i=1}^n (1-T_i) w_i=1$ before comparison.
	It is straightforward that the equality balancing constraints \eqref{equalityconstraints} lead to that $\text{ASMD}_{k}^w=0$ for all $k=1,\cdots, K$. Therefore,  finite-sample univariate exact balance is achieved  by the exact balancing methods. In comparison, inverse probability weighting does not possess this  attractive property.

	When exact balancing is not attainable in the bad-overlap situation, the MDABW method \citep{zubizarreta2015stable,wang2020minimal}   can be used instead.  The balanced weights for the treated group $\{i: T_i=1\}$ are the global minimum of the following optimization problem:
	\begin{equation}
		\begin{array}{ll}
			\underset{{w}}{\operatorname{minimize}} & \sum_{i=1}^{n} {T}_{i} f\left(w_{i}\right) \\
			\text { subject to } & \left|\sum_{i=1}^{n} w_{i} T_{i} \phi_{k}(X_{i})- \bar{\phi}_k\right| \leq \delta_{k}, k=1, \cdots, K,
		\end{array}
		\label{6}
	\end{equation}
	where $f(\cdot)$ is a pre-specified loss function, and $\delta_1,\cdots, \delta_K\geq 0$ are the  threshold parameters. Because the constraints are imposed separately for the basis functions, MDABW is a univariate approximate balancing method.
	The choices of $f(\cdot)$ is discussed in Section \ref{sec3}. Similarly, the weights for the control group $\{i: T_i=0\}$ can be obtained by replacing $T_i$ with $1-T_i$ in the   optimization problem \eqref{6}.
	
	If we normalize the basis function $\phi_{k}(X_{i})$ to be $\phi_{k}(X_{i})/\sqrt{\left(s_{1,k}^{2}+s_{0,k}^{2}\right) / 2}$  in the   optimization problem \eqref{6}, then
	$\text{ASMD}_{k}^w\leq 2\delta_k$  for $k=1,\cdots, K$.
	A small value of $\delta_k$, say $0.1$, indicates that the $\phi_{k}$ is approximately balanced. However, it is difficult to define optimality for all  threshold parameters $(\delta_1,\cdots,\delta_K)$ simultaneously   in the bad-overlap  setting, because decreasing univariate imbalance for some  basis function may inevitably increase  imbalance for other basis functions. Correlation of the basis functions are not taken into account in Problem \eqref{6}. Tuning a large number of parameters is very time-consuming, and  there is lack of guideline to tune these  parameters simultaneously. These potential limitations  motivate   Mahalanobis balancing in the following section.

	We define the following Mahalanobis imbalance measure (MIM) as a weighted version of the squared Mahalanobis distance:
	\begin{equation}
		\begin{aligned}\label{wMD}
			\text{MIM}^w=&\left\{\sum_{i=1}^{n} w_{i} T_{i} \Phi(X_i)-\sum_{i=1}^{n} w_{i} (1-T_{i}) \Phi(X_i)\right\}^{\top}\left( \frac{\hat{\Sigma}_1+\hat{\Sigma}_0}{2}\right)^{-1}\\
			&\times \left\{\sum_{i=1}^{n} w_{i} T_{i} \Phi(X_i)-\sum_{i=1}^{n} w_{i} (1-T_{i}) \Phi(X_i)\right\},
		\end{aligned}
	\end{equation}
	which can be used to measure the  remaining overall imbalance after reweighting. Note that  $\text{ASMD}_{k}^w=0$ for all $k=1,\cdots, K$ is equivalent to $\text{MIM}^w=0$. However, univariate approximate balance  does not imply   multivariate approximate balance, because it is possible that the $\text{ASMD}_{k}^w$ is small for all $k=1,\cdots, K$ but the $\text{MIM}^w$ is  large.

	\section{Proposed Methodology}\label{sec3}
	\subsection{Mahalanobis balancing}\label{sec3.1}
	
	To alleviate the  limitation of univariate  approximate balancing, we propose the Mahalanobis  balancing (MB) method as a multivariate   approximate balancing framework, which directly controls multivariate imbalance to produce balanced weights.  Specifically, Mahalanobis balancing obtains the balanced weights   for the treated group by solving the  convex optimization problem:
	\begin{equation}
		\begin{array}{ll}
			\underset{w \in \mathbb{R}^n}{\text{minimize}}:& \ \ \sum_{i=1}^{n}T_if(w_i)\\
			\text{subject to:}&\left\{
			\begin{array}{ll}
				w_i\geq 0,\ \  i\in\{j: T_j=1\}; \\
				\sum_{i=1}^{n}T_iw_i\{\Phi(X_i)-\bar{\Phi}\}^{\top}{W}\sum_{i=1}^{n}T_iw_i\{\Phi(X_i)-\bar{\Phi}\} \leq \delta,
			\end{array}
			\right.
		\end{array}
		\label{9}
	\end{equation}
	where    $f(\cdot)$ is a convex loss function, $\Phi(X)$ is a vector of basis functions defined in Section \ref{sec2.2}, $\bar{\Phi}=\sum_{i=1}^{n}\Phi(X_i)/n$, ${W}$ is a user-specified $K\times K$ positive-definite weight matrix, and $\delta\geq 0$ is a  univariate threshold parameter. The  MB  weights are  obtained after normalization for the treated subjects, i.e., $w_i^{MB}=w_i/\sum_{j=1}^n T_j w_j$. Similarly, the  MB weights for the control group  can be obtained.
	
	The role of each $w_i$ is to down-weigh or up-weigh the deviation of  $\Phi(X_i)$ from the pooled-sample mean $\bar{\Phi}$. The  matrix ${W}$ weighs the relative importance of the basis functions $\phi_1(\cdot),\cdots, \phi_K(\cdot)$.    In this paper, we consider  two choices of ${W}$: (i) the diagonal matrix ${W}_1=[\text{diag}\{ (\hat{\Sigma}_1+\hat{\Sigma}_0)/2\}]^{-1}$, where its main diagonal contains    the reciprocals of the diagonal elements of $(\hat{\Sigma}_1+\hat{\Sigma}_0)/2$; (ii) ${W}_2=\{ (\hat{\Sigma}_1+\hat{\Sigma}_0)/2\}^{-1}$. We use ${W}_2$  when the dimension $K$ is not large and  the basis functions are not highly correlated. Otherwise, we use ${W}_1$. We choose ${W}$ in the data-preprocessing step to normalize the data, and thus treat it as fixed in the theoretical development.
	
	For an arbitrary set of  normalized  weights for the treated subjects,  we define the  generalized  Mahalanobis imbalance measure (GMIM) for the treated  group:
	\begin{equation}
		\begin{array}{lll}
			\text{GMIM}_1^w=&\left\{\sum_{i=1}^{n}T_iw_i\Phi(X_i)-\bar{\Phi}\right\}^{\top}{W}\left\{\sum_{i=1}^{n}T_iw_i\Phi(X_i)-\bar{\Phi}\right\}.
		\end{array}
		\label{GMD}
	\end{equation}
	Similarly, we define $\text{GMIM}_0^w$ for the control group.
	The $\text{GMIM}_1^w$ measures the residual multivariate difference between the basis functions in the treated group and the sample average $\bar{\Phi}$ after reweighting. Note that  the  term $\sum_{i=1}^{n}T_iw_i\{\Phi(X_i)-\bar{\Phi}\}^{\top}{W}\sum_{i=1}^{n}T_iw_i\{\Phi(X_i)-\bar{\Phi}\}$  in Problem \eqref{9} is the unnormalized version of the $\text{GMIM}_1^w$. Therefore,  Problem \eqref{9} explicitly
	restricts the unnormalized residual multivariate imbalance by a single threshold parameter $\delta$.  When  Problem \eqref{9} is feasible for $\delta=0$, then MB reduces to  exact balancing.   When Problem \eqref{9} is infeasible for  $\delta=0$, we need to tune a positive value for $\delta$. In Section \ref{sec3.2}, we propose  to optimize post-weighting multivariate balance  by monitoring  $\text{GMIM}_1^w$ in the  selection procedure.
	
	We remark that we prefer to  minimize  $\text{GMIM}_1^w$ and  $\text{GMIM}_0^w$ separately rather than minimize the  Mahalanobis imbalance measure  $\text{MIM}^w$ in equation \eqref{wMD}. First, it allows us to easily obtain balanced weights from two  separate optimization problems. Second,  by minimizing $\text{GMIM}_1^w$ and  $\text{GMIM}_0^w$, the  Mahalanobis imbalance measure  $\text{MIM}^w$ is under control.  More importantly,   compared to the $\text{MIM}^w$, the $\text{GMIM}_1^w$ and $\text{GMIM}_0^w$ are more relevant to  the post-weighting  multivariate balancing performance. In particular,  even if the $\text{MIM}^w$ is very small, it does not imply that the   weighted basis functions $\sum_{i=1}^{n}T_iw_i\Phi(X_i)$ and $\sum_{i=1}^{n}(1-T_i)w_i\Phi(X_i)$ are close to the sample average  $\bar{\Phi}$. That is, it is possible that the weighted distributions in the two groups are close to each other, but meanwhile they are  different from the  underlying distribution in the target population so that the $\text{GMIM}_1^w$ and $\text{GMIM}_0^w$ are  large.  This phenomenon frequently occurs in the bad-overlap and high-dimensional settings.

	Next, we discuss the choice of the loss function $f(\cdot)$. One may use  the entropy  function $f(x)=x\log(x)$ \citep{hainmueller2012entropy}, the negative of empirical likelihood $f(x)=-\log(x)$, the quadratic function   by  \cite{zubizarreta2015stable}, the distance measure by \cite{chan2016globally}, and the Bregman distance by \cite{Josey2021}, among others. We prefer  the entropy  function $f(x)=x\log(x)$ for its stable performance and theoretical properties.
	
	The choice of the basis functions is important for the consistency  of the balancing methods \citep[e.g.][]{zhao2016entropy}.
	In practise, we consider  the first and second moments of $X$ as the   basis functions, and sometimes the interaction terms are included. Motivated by the  theory of Reproducing Kernel Hilbert Space (RKHS)  \citep{wong2018kernel,Zhao2019,Hazlett2020}, we  also consider  the second choice $\Phi(X)=(K(X,X_1),\cdots,K(X,X_n))^\top$ when the dimension of the covariates $p$ is large compared to the sample size $n$. Here, $K(\cdot,\cdot)$ is a pre-specified kernel function \citep[e.g.,][]{mohri2018foundations}, and the number of basis functions is $K=n$.

	In the following, we study the dual  of the primal optimization problem  \eqref{9}. By the Fenchel duality theory \citep{Bertsekas2016,mohri2018foundations, MordukhovichNam2022},  we show that the dual problem is a regularized propensity score model   with  an $\ell_2$ norm penalty. In Theorem \ref{thm1}, we formally establish the connection between Mahalanobis balancing and    $\ell_2$ shrinkage estimation of the propensity score model.
	
	We define some additional notations. Apply  Cholesky decomposition to the weight matrix  ${W}$ and write ${W}=({W}^{1/2})^\top{W}^{1/2}$. Let $\lVert\theta\rVert_2=\sqrt{\theta_1^2+\cdots+\theta_K^2}$ be the $\ell_2$ norm for an arbitrary $K$-dimensional vector $\theta=(\theta_1,\cdots,\theta_K)^\top$. The quadratic inequality constraint in  Problem \eqref{9} can then be rewritten as
	$\left.\lVert\sum_{i=1}^{n}T_iw_i{W}^{1/2}(\Phi(X_i)-\bar{\Phi})\right.\rVert_2 \leq \sqrt{\delta}$.
	Let $\tilde{f}(p)=f(p)$ if $p\geq 0$, and $\tilde{f}(p)=+\infty$, otherwise. Let $\partial g(x)$ be the subgradient of the function $g(\cdot)$ at $x$ in the  domain of the function \citep{mohri2018foundations}. Define a convex subset  $\mathcal{C}=\{u\in \mathbb{R}^K:\ \lVert u \rVert_2\leq \sqrt{\delta}\}$. Define $I_{\mathcal{C}}(u)=0$ if $u\in \mathcal{C}$, and  $I_{\mathcal{C}}(u)=+\infty$ otherwise. Then,  Problem \eqref{9} is equivalent to the following unconstrained primal optimization problem:
	\begin{equation}
		\underset{w \in \mathbb{R}^{n_1}}{\text{minimize}}: \ \sum_{i=1}^{n}T_i\tilde{f}(w_i)+I_{\mathcal{C}}\left(\sum_{i=1}^{n}T_iw_i{W}^{1/2}(\Phi(X_i)-\bar{\Phi})\right).
		\label{Primal}
	\end{equation}
	
Let $\tilde{f}^{*}(\cdot)$ be the conjugate function of $\tilde{f}(\cdot)$ \citep{Bertsekas2016}.  The following theorem states the dual problem.
	\begin{theorem} \label{thm1}
		  The dual of Problem  \eqref{Primal} is the following unconstrained optimization problem:
		\begin{equation}
			\underset{\theta \in \mathbb{R}^K}{\text{minimize}}: \ \sum_{i=1}^{n}T_i\tilde{f}^*\left(\theta^\top {W}^{1/2}(\Phi(X_i)-\bar{\Phi})\right)+\sqrt{\delta}\lVert \theta\rVert_2.
			\label{Dual}
		\end{equation}
		Let $\hat{w}=\{\hat{w}_i: T_i=1\}$ be the  solution of the primal problem \eqref{Primal}  and $\hat{\theta}=(\hat{\theta}_1,\cdots,\hat{\theta}_K)^\top$ be the  solution of the dual problem \eqref{Dual}. The $\hat{w}_i$ can be expressed as a function of  $\hat{\theta}$:
		\begin{equation}
			\hat{w}_i=\partial(\tilde{f}^{*})\left(\hat{\theta}^{\top} {W}^{1/2}(\Phi(X_i)-\bar{\Phi})\right), \ \ \text{for}\ \  i\in\{j: T_j=1\}.
			\label{relation}
		\end{equation}
	\end{theorem}
	
In the rest of the paper, we consider the simpler situation that the loss function $f(\cdot)$ is differentiable and that it has non-negative domain as in \cite{Josey2021}. Entropy, KL divergence, empirical likelihood, logistic loss \citep{tan2020regularized} and many other choices satisfy this condition, whereas the quadratic function in \cite{zubizarreta2015stable} does not. This condition allows us to consider gradient instead of subgradient, and obtain $\hat{w}_i=(\tilde{f}^{*})^{\prime}\left(\hat{\theta}^{\top} {W}^{1/2}(\Phi(X_i)-\bar{\Phi})\right)=(\tilde{f}^{\prime})^{-1}(\hat{\theta}^{\top}W^{1 / 2}\left(\Phi\left(X_{i}\right)-\bar{\Phi}\right))$, where  $(\tilde{f}^{*})^{\prime}(\cdot)$ is the  first derivative of  $\tilde{f}^{*}(\cdot)$.	For example, when we use the entropy  $f(x)=x\log(x)$ as the loss function, the dual problem \eqref{Dual} becomes
	$$\underset{\theta \in \mathbb{R}^K}{\text{minimize}}: \ \sum_{i=1}^{n}T_i\exp\left\{\theta^\top {W}^{1/2}(\Phi(X_i)-\bar{\Phi})-1\right\}+\sqrt{\delta}\lVert \theta\rVert_2,$$
	and the estimated balanced weight $\hat{w}_i=\exp\left\{\hat{\theta}^\top {W}^{1/2}(\Phi(X_i)-\bar{\Phi})-1\right\}$.
	The  estimated MB weight, denoted by $\hat{w}_i^{MB}$, is then obtained by normalization:
	\begin{equation}
		\hat{w}_i^{MB}=\frac{\hat{w}_i}{\sum_{j:T_j=1}\hat{w}_j}=\frac{\exp\left\{\hat{\theta}^\top {W}^{1/2}\Phi(X_i)\right\}}{\sum_{j:T_j=1}\exp\left\{\hat{\theta}^\top {W}^{1/2}\Phi(X_j)\right\}}, \ \ \text{for all}\ \ i\in\{j: T_j=1\}.\label{MBweight}
	\end{equation}
	When the propensity score $\pi(x)=\text{Pr}(T=1 \mid {x})$ follows the log model: $\log(\pi(x;\beta))=\beta^\top \Phi(x)$,  the inverse probability weight $1/\{n\pi(X_i;\beta)\}$ coincides with  the expression of the unnormalized weight $\hat{w}_i$ when $\Phi(\cdot)$ includes an intercept. Moreover,  the normalized inverse probability weight
	$\pi^{-1}(X_i;\beta)/\{\sum_{j:T_j=1}\pi^{-1}(X_j;\beta)\}$ coincides with the expression of the MB weight  $\hat{w}_i^{MB}$. There is a one-to-one correspondence of the dual parameter  $\theta$ and the  coefficient $\beta$ in the  propensity score model $\pi(x;\beta)$: $\beta=(W^{1/2})^\top\theta$. Therefore, solving the dual problem \eqref{Dual} is equivalent to fitting an  $\ell_2$ norm-based regularized generalized linear  model with the logarithm as the link function.
	
	When exact balancing is feasible, the balanced weights by  entropy balancing   \citep{hainmueller2012entropy,zhao2016entropy}  have  the same expression as in  equation \eqref{MBweight}, though  parameter estimation is  different because it does not involve  regularization. In the bad-overlap or high-dimensional situation,  entropy balancing is infeasible. In contrast, 	there is no  normalization constraint $\sum_{i=1}^n T_i w_i=1$  in   the primal problem  \eqref{9}. Instead, we normalize the  weights   after solving  Problem \eqref{9}. This two-step strategy guarantees that Problem \eqref{9} is always feasible for any $\delta> 0$ under the mild condition that $W$ is positive-definite and $\phi_k(X_i)$ is bounded for any $i$ and $k$.

Moreover, by regularizing the dual parameter $\theta$ with the   $\ell_2$ norm,   Mahalanobis balancing stabilizes the estimated weights. The degree of regularization is determined by the tuning parameter $\delta$, which controls the level of residual multivariate imbalance after reweighting. In contrast to entropy balancing which  enforces  finite-sample univariate exact balance, Mahalanobis balancing maintains   finite-sample multivariate  approximate balance.

	When  the loss function  is  the quadratic function  by  \cite{zubizarreta2015stable}, the distance measure by \cite{chan2016globally}, or the Bregman distance by \cite{Josey2021}, the conjugate function can be calculated analogously but the expression can be  complicated. We refer the readers to the  monograph by \cite{Bertsekas2016} for an extensive discussion of the conjugate functions.  We prefer  the entropy loss function  $f(x)=x\log(x)$ in Mahalanobis balancing for its theoretical properties \citep{zhao2016entropy} and stable performance in our numerical experience.
	
	Intuitively, a larger value of $\delta$ results in more conservative $\theta$ value, which   leads to more stable     MB weights. Consequently,    treatment effect estimation may be more biased but exhibit less variability. Hence, selection of $\delta$ is a key to the bias and variance  trade-off for  treatment effect estimation. In Section \ref{sec3.2.2}, we make comparison of the proposed method to existing balancing methods from the perspective of treatment effect estimation. In Section \ref{sec3.3}, we study the asymptotic properties. In Section \ref{sec3.2}, we discuss  selection of  $\delta$ in details.
	
	\subsection{A comparison with existing balancing methods}\label{sec3.2.2}

	We compare  Mahalanobis balancing to  existing balancing methods from the outcome modelling perspective.  Write
	$${Y_i(1)} =   \mu_1(X_i) + \epsilon_{1i},$$
	where $\mu_1(X_i)=E\{Y_i(1)\mid X_i\}$. Assume that the $\epsilon_{1i}$ are mutually independent with mean zero and finite variance, and that they are independent of the covariates. Let $\hat{w}_1,\cdots, \hat{w}_n$ be the normalized weights estimated by an arbitrary weighting method, and  $\hat{\tau}_1=\sum_{i=1}^n T_i \hat{w}_i Y_i$ be the resulting weighted estimator of $\tau_1=E\{Y(1)\}$. It follows that \citep{wong2018kernel}
	$$
	\widehat{\tau}_1 -\tau_1 =  \underbrace{\sum_{i=1}^n \left(T_i\hat{w_i}- \frac{1}{n}\right)\mu_1(X_i)}_{A_1}    + \underbrace{\sum_{i=1}^n T_i\hat{w_i}\epsilon_{1i}}_{A_2}+ \underbrace{\frac{1}{n}\sum_{i=1}^{n}\mu_1(X_i)- E\{Y(1)\}}_{A_3}.
	$$
	
	A good weighting method should minimize or control the magnitudes of  the terms $A_1$ and $A_2$.  We first discuss the term $A_1$. If  the linear outcome model  $\mu_1(X) = \beta_1^{\top} \Phi(X)$ holds \citep[e.g.,][]{athey2016approximate}, then
	$A_1 = \beta_1^{\top}  (\sum_{i=1}^n T_i\hat{w_i}\Phi(X_i)- \bar{\Phi})$.
	The exact balancing  methods \citep[e.g.,][]{hainmueller2012entropy,imai2014covariate,chan2016globally,fan2022optimal} impose $\sum_{i=1}^n T_i \hat{w_i}\Phi(X_i) = \bar{\Phi}$, and thus  $A_1$ is exactly zero. In the bad-overlap or high-dimensional situation,  the term $A_1$ cannot be fully eliminated, and exact balancing is not applicable.   The covariate functional balancing method  \citep{wong2018kernel} directly minimizes $A_1^2$ with the basis functions $\Phi(\cdot)$  restricted to the RKHS.
	The MDABW method \citep{zubizarreta2015stable,wang2020minimal}  bounds the $k$th component of $|\sum_{i=1}^n T_i\hat{w_i}\Phi(X_i)- \bar{\Phi}|$ by a threshold parameter $\delta_k$, $k=1,\cdots, K$. Therefore, $|A_1|\leq |\beta_1^{\top}\delta|$, where $\delta=(\delta_1,\cdots, \delta_K)^\top$, suggesting that $A_1$ is  controlled by  the MDABW method. However, it is difficult to tighten the upper-bound by tuning $(\delta_1,\cdots, \delta_K)$ alone without modelling the outcomes.
	Note  that
	$|A_1|\leq\|\beta_1^\top {W}^{-1/2}\|_2  \sqrt{\text{GMIM}_1^{\hat{w}}}$, where the term $\text{GMIM}_1^{\hat{w}}$ does not involve the unknown regression parameter $\beta_1$.
	This inequality immediately suggests a simple guideline for the selection of the univariate threshold parameter $\delta$ in  Mahalanobis balancing. A good choice of  $\delta$ should make $\text{GMIM}_1^{\hat{w}}$ small enough so that the term $A_1$ is under control. We provide more details in Section \ref{sec3.2}.

	Moreover,  regardless of the linear outcome model assumption, the Cauchy-Schwarz inequality leads to that
	$|A_1|\leq
	\sqrt{\|\hat{w}\|_{2}^2-1/n}\sqrt{\sum_{i=1}^n \{\mu_1(X_i)\}^2}$
	and $|A_2|\leq \|\hat{w}\|_{2}\|\epsilon_1\|_{2}$, where  $\|\hat{w}\|_{2}=\sqrt{\sum_{i=1}^n T_i\hat{w}_i^2}$
	and  $\|\epsilon_1\|_{2}=\sqrt{\sum_{i=1}^n T_i\epsilon_{1i}^2}$. Therefore,  to minimize or control the magnitudes of  the terms $A_1$ and $A_2$, a good weighting method should control $\|\hat{w}\|_{2}$. Intuitively, it suggests that the weights should be  stable  and no extreme weights are allowed.  The covariate functional balancing method  \citep{wong2018kernel} uses $\|\hat{w}\|_{2}^2$ as a regularizer.
	The approximately residual balancing method \citep{athey2016approximate} minimizes a linear combination of  $\|\sum_{i=1}^n T_i\hat{w_i}\Phi(X_i)- \bar{\Phi}_0\|_{\infty}^2$ and $\|\hat{w}\|_{2}^2$, where $\bar{\Phi}_0$ is the sample average of $\Phi(X_i)$ in the control group.
	Mahalanobis balancing stabilizes the estimated weights via the  $\ell_2$ norm regularizer.  We  show that $\|\hat{w}\|_{2} = O_p(n^{-1/2})$ holds for  Mahalanobis balancing in the Supplementary Material.

	\subsection{Asymptotic properties}\label{sec3.3}
	The proposed Mahalanobis balancing method estimates the average treatment effect by
	$$\hat{\tau}^{MB}=\sum_{i=1}^{n} T_{i}\hat{w}_i^{MB} Y_{i}-\sum_{i=1}^{n}\left(1-T_{i}\right) \hat{w}_i^{MB} Y_{i},$$ where the $\hat{w}_i^{MB}$ are the estimated MB weights obtained by normalizing  the weights  $\hat{w}_i$ in equation \eqref{relation}.
	In this subsection, we prove that this MB-based ATE estimator is    doubly robust and semiparametrically efficient under mild regularity conditions. The proofs are given in the Supplementary Material.
	
	\begin{assumption} Assume the following conditions:
		
		(3.1). The optimization problem $\min _{{\theta} \in \Theta} E[ \tilde{f}^*(\theta^{\top} {W}^{1/2}(\Phi(X_i) - E\{\Phi(X)\}))]$ has  a unique global minimizer  for ${\theta}$, where ${\theta} \in \operatorname{int}(\Theta)$, and $\Theta$ is a compact parameter space for ${\theta}$.
		
		(3.2).  $\delta=o(n)$.
		
		(3.3).
		The conjugate function $\tilde{f}^{*}(\cdot)$ satisfies the property that if  $\tilde{f}^*(\theta^{\top} {W}^{1/2}(\Phi(X) - E\{\Phi(X)\}))=C*\tilde{f}^{*'}(\theta^{*\top} {W}^{1/2}(\Phi(X) - E\{\Phi(X)\}))$ for some  constant $C$ for all $X$, then  ${\theta}^{*}={\theta}$.
		
		(3.4). 	  $E\{\exp(a\theta^{\top}W^{1 / 2}[\Phi(X)-E\{\Phi(X) \mid T=j \} ])\} < \infty$ for all $\theta \in \Theta$, where $a = 1,2,3$, $j=0,1$. Moreover, $Var(\Phi(X)) < \infty$ and $Var(\Phi(X) \mid T)< \infty$.
		
		(3.5). The noises $\epsilon_{1i}={Y_i(1)}-E\{Y_i(1)\mid X_i\}$  are mutually independent  and sub-Gaussian. Similarly, the noises $\epsilon_{0i}={Y_i(0)}-E\{Y_i(0)\mid X_i\}$  are mutually independent  and sub-Gaussian. All the noises are independent of $X$.
		
		(3.6). $\eta \leq \pi(X) \leq 1-\eta$, where $0< \eta < 1/2$.
		
	\end{assumption}
	
	Assumption (3.1) is a standard requirement  for  consistency of the ATE estimator. Assumption (3.2) requires that the threshold parameter $\delta$ should not be  large. Assumption (3.3) is satisfied by  the common choices $f(x)=x\log(x)$ or $-\log(x)$.  Assumptions (3.4) and (3.5) are mild conditions to control the moments of the covariates and the noises. Assumption (3.6) is the strict overlap assumption.
	
	\begin{theorem}\label{thm2}
		
		Suppose that  Assumptions 1-3 hold, then the MB-based ATE estimator $\hat{\tau}^{MB}$ is doubly robust in the sense that:
		
		(i). If the propensity score satisfies $1/\pi(X)= C*\tilde{f}^{*'}(\theta^{*\top} {W}^{1/2}(\Phi(X)-\bar{\Phi}))$ for some $\theta^{*} \in \operatorname{int}(\Theta)$ and some constant $C$ for all $X$, then  $\widehat{\tau}^{MB} - \tau = O_p(n^{-1/2})$;
		
		(ii). If the conditional potential outcomes $E\{Y(1)|X\}$  and $E\{Y(0)|X\}$ are linear combinations of the basis functions $\Phi(X)=(\phi_1(X),\cdots, \phi_K(X))^\top$,    then  $\widehat{\tau}^{MB} - \tau = O_p(n^{-1/2})$ when at least one of the following two statements  is true: (a).  The loss function is the entropy function; (b).   The following conditions hold: \\$\sum_{i=1}^n I(T_i=j) (\hat{w_i}^{MB})^{3} / ({\sum_{i=1}^n I(T_i=j)(\hat{w}_i^{MB})^2})^{3/2} = o_p(1)$, and ${\sum_{i=1}^n I(T_i=j)(\hat{w}_i^{MB})^2} = O_p(n^{-1})$, $j=0,1$.
		
	\end{theorem}
	Note that the condition $1/\pi(X)= C*\tilde{f}^{*'}(\theta^{*\top} {W}^{1/2}(\Phi(X)-\bar{\Phi}))$ implies that the unnormalized MB weight in equation \eqref{relation} is proportional to the inverse weight $1/\{n\pi(X_i)\}$.  When we use entropy as the loss function, then this assumption says that
	$1/\pi(X)= C*\exp(\theta^{*\top} \Phi(X))$ holds for some   $\theta^{*}$  and constant $C$. The  two conditions in  statement (b)  require that  the weights should be smoothly distributed and no extremely large weights are allowed. That is, the  weights should be stable.   When the loss function is the entropy function, these two conditions    are automatically satisfied under   Assumption 3.

	\subsection{Tuning parameter selection}\label{sec3.2}
	
Recall that Problem \eqref{9} is always feasible for any $\delta> 0$ under  mild conditions. Therefore, we are allowed to freely try a set of grid points  and select the best $\delta$ that minimizes multivariate imbalance. In contrast to the MDABW method,  no resampling is required  in the selection procedure.
	
	Now, we present the  selection procedure  in details.
	First, set up a set of positive values $\mathcal{D}$ for $\delta$ selection. We use $\mathcal{D}=\{10^{-k}: k=0,1,\cdots, 6\}$ in the simulations and  application. In extensive numerical studies, we
find that searching $\delta$ in this range leads to satisfactory performance of the MB method. More grid points can  be added to $\mathcal{D}$ to potentially better control multivariate imbalance, though we do not find notable improvement in our simulations.

	The algorithm for selection of $\delta$ is as follows:
	
	\begin{algo} Selection of $\delta$:
		\label{algorithm 1}
		\begin{tabbing}
			\qquad \enspace For each $\delta\in \mathcal{D}$:\\
			\qquad \qquad Compute the dual parameter $\hat{\theta}$ by solving Problem \eqref{Dual};\\
			\qquad \qquad Compute the balanced  weights $\hat{w}_i$ using \eqref{relation};\\
			\qquad \qquad Obtain the Mahalanobis balancing weights by  normalization: $\hat{w}_i^{MB}={\hat{w}_i}/{\sum_{j:T_j=1}\hat{w}_j}$;\\
			\qquad \qquad Calculate  $\text{GMIM}_1^w$ in \eqref{GMD} using  the $\hat{w}_i^{MB}$;\\
			\qquad \enspace  Output $\delta^*$ that minimizes  $\text{GMIM}_1^w$.
		\end{tabbing}
	\end{algo}
	Similarly, we select the optimal threshold parameter for the control group by minimizing  $\text{GMIM}_0^w$.  Let $(\hat{w}_1^{MB},\cdots, \hat{w}_n^{MB})$ be the resulting MB weights corresponding to the  optimal threshold parameters in the two groups. The average treatment effect is then estimated by $\hat{\tau}^{MB}=\sum_{i=1}^{n} T_{i}\hat{w}_i^{MB} Y_{i}-\sum_{i=1}^{n}\left(1-T_{i}\right) \hat{w}_i^{MB} Y_{i}$.

	Interestingly, our empirical experience is that the ATE estimation is not quite sensitive to the choice of  $\delta$ if it is small enough. In the Supplementary Material,  the simulations reveal that the outputs  of ATE estimation and imbalance measure are similar when $\delta\leq 10^{-2}$. Therefore, if computation cost is a concern, we suggest  using  a fixed small value for $\delta$ in practise, say $\delta=10^{-4}$.

We offer some empirical guideline for evaluating whether the  $\text{GMIM}$ value is ``small enough'' for the MB method and other weighting methods.   The concept that a threshold for $\text{GMIM}$  is ``small enough'' is empirical,  as analogous to the  thresholds 0.1, 0.2 and  0.25 for ASMD \citep[e.g.][]{Stuart2013,imbens2015causal,Xie2019}. Our guideline is that if $\text{GMIM}\leq 1$ and $\text{GMIM}/K \leq 0.01$, then the covariates are approximately balanced in the multivariate sense.  The latter condition implies that the TASMD values are small. If either condition fails, then it implies that the  weighting method cannot effectively balance covariates, and the corresponding treatment effect estimation may exhibit significant bias.
	
	\subsection{High-dimensional Mahalanobis balancing}\label{sec3.4}
	In the high-dimensional setting where  $K$ is large compared to the sample size $n$, it becomes very difficult to simultaneously  balance all basis functions  even when each of the  basis functions is approximately balanced in the univariate sense. The difficulty is easily seen from  the multivariate  perspective: when the basis functions are mutually independent, the squared Mahalanobis distance increases linearly with $K$, and hence  multivariate  imbalance can be  hardly controlled for large $K$.   Exact balancing  is infeasible, and   it becomes  more difficult for univariate approximate balancing  to obtain optimal   threshold parameters.
	
	Mahalanobis balancing is still feasible in the high-dimensional situation under the mild conditions that the weight matrix is positive-definite and the basis functions are bounded. Moreover, it still only needs to tune one threshold parameter.
	By minimizing  multivariate imbalance to some extent, Mahalanobis balancing  generally produces more balanced weights and less biased treatment effect estimation compared to univariate approximate balancing.
	
	We provide asymptotic property for   the MB-based ATE estimator in the high-dimensional situation. We choose the entropy function as the loss function.
	
	\begin{assumption}  Assume the following conditions:
		
		(4.1). The number of constraints $K/n \rightarrow \tau $, where $0 < \tau < 1$.
		
		(4.2). $|\theta^{\top}W^{1/2}(E\{\Phi(X)\} - E\{\Phi(X)\mid T\})| \leq \gamma\log(n)$ for all $\theta \in \Theta$, where $\gamma>0$.
		
		(4.3). $\mu_j(X) = \beta_j^{\top}\Phi(X)$ with $\|\beta_j\|_2 = O(n^{\alpha})$, $j=0,1$, where $\alpha$ is a real number.
		
		(4.4) $\delta=O(n^{2s})$, where $s$ is a real number.
		
	\end{assumption}
	
	Assumption (4.1) restricts the  number of constraints $K$. It is allowed to increase with the sample size $n$, but it should not exceed $n$. Assumption (4.2) adds  additional assumption on $W^{1/2}(E\{\Phi(X)\} - E\{\Phi(X)\mid T\})$. It is satisfied, for instance, when  $\|\theta\|_2 \leq \gamma\log(n)$ and $\|E\{\Phi(X)\mid T\} - E\{\Phi(X)\}\|_2 \leq 1$, or when the $\theta$ is sparse.   Assumption (4.3) imposes linear outcome models where the magnitude of the coefficients is not too large.  Assumption (4.4) requires that $\delta$ should not be large.
	
	\begin{theorem}	\label{thm4}
		Suppose that  Assumptions 1-4 hold, then $\widehat{\tau}^{MB} - \tau = O_p(n^{-1/2}) + O_p(n^{s+ \gamma + \alpha})$.
	\end{theorem}
	
	It suggests that $\delta$ should be small, so one may set $s=0$ in Assumption (4.4). According to  Theorem \ref{thm4}, given that $s+ \gamma + \alpha\leq-1/2$, the MB-based ATE estimator $\widehat{\tau}^{MB}$ is $\sqrt{n}$-consistent for $\tau$. Moreover, this theorem  illustrates the difficulty of ATE estimation in the high-dimensional setting. For example, when $K = n$ and $\gamma > 0$, then Theorem \ref{thm4} no longer guarantees the consistency of the MB-based ATE estimator. In the Supplementary Material, we propose a modified version of Mahalanobis balancing,  and make extensive comparison of Mahalanobis balancing with existing regularized balancing methods in both sparse and non-sparse high-dimensional simulation settings.

\subsection{Semiparametric efficiency}
    Next, we prove the semiparametric efficiency property for the MB-based ATE estimator.

	\begin{assumption} Assume the following conditions:
		
		(5.1). Suppose that $1/\pi(X)= \tilde{f}^{*'}(m^{*}(X))$ for all $X$, where $m^{*}(\cdot) \in \mathcal{M}$ and  $\mathcal{M}$  is a set of smooth functions satisfying $\log n_{[]}\{\varepsilon, \mathcal{M}, L_{2}(P)\} \leq$ $C_1(1 / \varepsilon)^{1 / k_{1}}$. Here, $C_1$ is a positive constant, $k_{1} > 1/2$, and $n_{[]}\{\varepsilon, \mathcal{M}, L_{2}(P)\}$ denotes the covering number of $\mathcal{M}$ by $\varepsilon$-brackets.
		
		(5.2). $\tilde{f}^{*'}(m^{*}(X))$ is Lipschitz in $\mathcal{X}$, where $\mathcal{X}$ is the domain of covariate $X$.
		
		(5.3). There exists a value $\theta^{*}$ such that  $\sup_{x \in \mathcal{X}} \|\theta^{*\top} \left(\Phi\left(x\right)-\bar{\Phi}\right) -  m^{*}(x)\|_2 = o_p(1)$.
		
		(5.4).  The conditional potential outcomes $E\{Y(1)|X\}$  and $E\{Y(0)|X\}$ are linear combinations of the basis functions $\Phi(X)=(\phi_1(X),\cdots, \phi_K(X))^\top$.

        (5.5). There exists a constant $C_2$ such that $$\sup _{x \in \mathcal{X}}\|\left(\Phi\left(x\right)-\bar{\Phi}\right)\|_2 \leq C_2 n^{1 / 2} \text { and } E\left\{\|\left(\Phi\left(x\right)-\bar{\Phi}\right)\|_2^2\right\} \leq C_2$$

        (5.6). $s+ \gamma + \alpha < -\frac{1}{2}$.

        (5.7). Let $g(X_i) = \frac{T_i}{\pi(X_i)}\left(\Phi\left(X_{i}\right)-\bar{\Phi}\right)$, then $g(X)$ has independent $\sigma^2$-sub-gaussian entries $g_{j}$ with $\operatorname{Var}\left(g_{j}\right)>\tau^2$, where $\sigma=O(\operatorname{polyLog}(\mathrm{n}))$ and $\tau=O\left(\frac{1}{\operatorname{polyLog}(\mathrm{n})}\right)$.
	\end{assumption}
	
	Our setting is different from \cite{wang2020minimal}, where they require the dimension $K$ should be shrunk faster than the sample size $n$. Assumption (5.1) requires that  the complexity of the function class is sufficiently smooth. \cite{wang2020minimal} noted that the H\"{o}lder class with smoothness parameter $v$   with $v/K > 1/2$  satisfies this condition \citep[see also][]{van1996weak, fan2022optimal}.   Assumption (5.2) bounds the second derivative of the function $\tilde{f}^{*}(\cdot)$. Assumption (5.3) requires that the $m^*(\cdot)$ can be approximated by the linear span of the basis functions. It is similar to Assumption 1.6 by \cite{wang2020minimal}.  Assumption (5.4) is a standard condition  for  semiparametric efficiency. This condition is different from \cite{wang2020minimal}, since we discuss the weaker assumption that $K/n \rightarrow \tau>0$. Assumption (5.5) is a standard technical assumption that restricts the magnitude of the basis functions. Assumption (5.6) corresponds to the level of sparsity. Assumption (5.7) provides a lower bound for the smallest eigenvalue for the empirical covariance matrix of $\Phi(X)$.

	\begin{theorem}
		Suppose that Assumptions 1-5 hold,  then the  MB-based ATE estimator reaches the semiparametric efficiency bound.
	\end{theorem}

	\section{Numerical Studies}\label{sec4}
	In this section, we compare  Mahalanobis balancing to two classes of existing balancing methods for  ATE estimation in numerical studies. The first class consists of exact balancing methods, including entropy balancing (EB) \citep{hainmueller2012entropy}    implemented by  the R package \texttt{WeightIt}, covariate balancing propensity score (CBPS) \citep{imai2014covariate}  implemented by   the R package \texttt{CBPS}, and calibration weighting (CAL) \citep{chan2016globally}  implemented by the R package \texttt{ATE}. The second class consists of univariate approximate balancing methods, among which  the MDABW method \citep{wang2020minimal} is implemented by the R package \texttt{sbw}. We also report  the unadjusted ATE estimator using the simple difference in the outcomes (Unad) and the propensity score weighting estimator (PS) using logistic regression  implemented by the R package \texttt{WeightIt}. In the end of this section, we discuss the kernel-based covariate balancing method  \citep{wong2018kernel}.
	
	For each method, we report  the bias (Bias) of  ATE estimation, Monte Carlo standard deviation (SD), root mean squared error (RMSE), Monte Carlo average of the  mean of the $\text{TASMD}_{k}$, $k=1,\cdots, K$  (meanTASMD) as a univariate post-reweighting imbalance measure, and  the generalized  Mahalanobis imbalance measure (GMIM) $\text{GMIM}=\text{GMIM}_1+\text{GMIM}_0$   with  ${W}_1$  as the weight matrix  to quantify residual multivariate imbalance.
	
    We assess the performance of  Mahalanobis balancing  using  ${W}_1$  (denoted by MB) or ${W}_2$ (denoted by MB2) as the weight matrix, respectively, and make comparison with  exact balancing  methods (EB, CBPS, CAL) and the MDABW method. We consider  five scenarios.

	For each scenario, we conduct 1000 Monte Carlo simulations. The sample size is set to be $n=200$ or $n=1000$. Table \ref{Table 1} summarizes the outputs. The Supplementary Material provides additional information.

	\begin{table}[b]
		\caption{ Simulation outputs in the low-dimensional settings.}
		\label{Table 1}
		\resizebox{\textwidth}{!}{
			\begin{tabular}{ccccccccccc}
				\toprule
				& \multicolumn{1}{c}{{\textbf{Unad}}} & \multicolumn{1}{c}{\textbf{PS}} & \multicolumn{1}{c}{\textbf{EB}} &  \multicolumn{1}{c}{\textbf{CBPS}} & \multicolumn{1}{c}{\textbf{CAL}} & \multicolumn{1}{c}{\textbf{MB}} & \multicolumn{1}{c}{\textbf{MB2}} & \multicolumn{1}{c}{\textbf{MDABW}} & \multicolumn{1}{c}{\textbf{KERNEL}} \\
				\midrule
				\multicolumn{10}{l}{ Scenario A  with $(n,K)=(200,10)$: both propensity score and  outcome models are misspecified.}\\
				\multicolumn{1}{c}{{\textbf{Bias}}}
				&-1.79   &0.57   &0.31   &0.28   &0.31   &0.29   &0.29   &0.51  &-0.03\\
				\multicolumn{1}{c}{\textbf{SD}}
				&2.94 &3.33 &3.06 &3.12 &3.06 &3.06 &3.06 &3.08 &3.09\\
				\multicolumn{1}{c}{\textbf{RMSE}}
				&3.44 &3.38 &3.08 &3.13 &3.08 &3.07 &3.07 &3.12 &3.09\\
				\multicolumn{1}{c}{{\textbf{meanTASMD}}}
				&0.15 &0.06 &0.00 &0.05 &0.00 &0.00 &0.00 &0.01 &0.02\\
				\multicolumn{1}{c}{{\textbf{GMIM}}}
				&0.20 &0.06 &0.00 &0.02 &0.00 &0.00 &0.00 &0.00 &0.01\\
				\\
				\multicolumn{10}{l}{  Scenario B   with $(n,K)=(200,65)$:   the confounders include covariate interactions.}\\
				\multicolumn{1}{c}{{\textbf{Bias}}}
				&6.93 &4.14 &- &3.72 &- &-0.53 &3.13 &1.44 &2.95\\
				\multicolumn{1}{c}{\textbf{SD}}
				&4.07 &7.49 &- &5.17 &- &1.00 &2.58 &1.38 &2.11\\
				\multicolumn{1}{c}{\textbf{RMSE}}
				&8.04 &8.56 &- &6.37 &- &1.13 &4.05 &1.99 &3.62\\
				\multicolumn{1}{c}{{\textbf{meanTASMD}}}
				&0.21 &0.33 &- &0.23 &- &0.04 &0.13 &0.12 &0.12\\
				\multicolumn{1}{c}{{\textbf{GMIM}}}
				&2.09 &7.36 &- &2.94 &- &0.12 &0.81 &0.55 &225.50\\
				\\
				\multicolumn{10}{l}{   Scenario C  with $(n,K)=(200,10)$:    covariate means are very different.}\\
				\multicolumn{1}{c}{{\textbf{Bias}}}
				&15.00 &8.71 &- &7.05 &- &0.69 &0.92 &3.31 &8.94 \\
				\multicolumn{1}{c}{\textbf{SD}}
				&0.76 &4.24 &- &4.52 &- &0.84 &0.92 &1.45 &2.12\\
				\multicolumn{1}{c}{\textbf{RMSE}}
				&15.01 &9.69 &- &8.37 &-  &1.09 &1.30 &3.61 &9.19\\
				\multicolumn{1}{c}{{\textbf{meanTASMD}}}
				&0.89 &0.67 &- &0.51 &- &0.04 &0.05 &0.20 &0.53\\
				\multicolumn{1}{c}{{\textbf{GMIM}}}
				&4.07 &2.80 &- &1.74 &- &0.03 &0.04 &0.23 &1.63\\
				\\
				\multicolumn{10}{l}{  Scenario D  with $(n,K)=(1000,10)$:    sample sizes are very imbalanced across treatment groups.}\\
				\multicolumn{1}{c}{{\textbf{Bias}}}
				&-8.60 &-1.60 &- &-0.74 &- &-0.25 &-0.25 &-1.37 &-6.06 \\
				\multicolumn{1}{c}{\textbf{SD}}
				&0.31 &2.11 &- &0.41 &- &0.44 &0.44 &0.77 &2.32\\
				\multicolumn{1}{c}{\textbf{RMSE}}
				&8.60 &2.65 &- &0.85 &- &0.50 &0.50 &1.57 &6.49\\
				\multicolumn{1}{c}{{\textbf{meanTASMD}}}
				&0.48 &0.25 &- &0.19 &- &0.01 &0.01 &0.09 &0.31\\
				\multicolumn{1}{c}{{\textbf{GMIM}}}
				&1.58 &0.86 &- &0.28 &- &0.00 &0.00 &0.05 &1.03\\
                \\
                \multicolumn{10}{l}{  Scenario E with $(n,K)=(200,5)$:   heavy-tailed covariates.}\\
				\multicolumn{1}{c}{{\textbf{Bias}}}
				&-8.91 &-3.40 &- &-1.33 &- &-0.66 &-0.66 &-2.55 &-4.41\\
				\multicolumn{1}{c}{\textbf{SD}}
				&0.61 &2.10 &- &0.85 &- &0.68 &0.68 &1.56 &1.12\\
				\multicolumn{1}{c}{\textbf{RMSE}}
				&8.93 &3.99 &- &1.58 &- &0.94 &0.95 &2.99 &4.55\\
				\multicolumn{1}{c}{{\textbf{meanTASMD}}}
				&0.35 &0.25 &- &0.23 &- &0.02 &0.02 &0.16 &0.14\\
				\multicolumn{1}{c}{{\textbf{GMIM}}}
				&0.91 &0.70 &- &0.42 &- &0.01 &0.01 &0.21 &0.21\\
				\bottomrule
			\end{tabular}
		}
	\end{table}

	In Scenario A, we consider the situation where  both propensity score and outcome models are misspecified. This setting similar to that of  \cite{kang2007demystifying}.  For each simulation, we first generate a standard normal random vector $Z=\left(Z_{1}, \ldots, Z_{p}\right)^{\top}$ with $p=10$ for each observation, and then generate the  covariates ${X}$ from $X_{1}=\exp(Z_{1} / 2), X_{2}=Z_{2}/(1+\exp (Z_{1})), X_{3}=(Z_{1} Z_{3} +0.6)^{3}, X_{4}=(Z_{2}+Z_{4}+20)^{2}, X_{j}=Z_{j}, j=5, \cdots, 10$. The potential outcomes are derived from  linear regression  utilizing the variables $Z$ as follows: $(Y(1), Y(0))=(210+ 13.7\sum_{i=1}^{4}Z_{i}+\epsilon_1, 210-6.85\sum_{i=1}^{4}Z_{i}+\epsilon_0)$, where $\epsilon_1$ and $\epsilon_0$ are independent standard normal variables which are independent of  $(T,Z)$. The true ATE is zero. The treatment indicator is generated from  $T\sim \text{Bernoulli}(\pi(Z))$, where $\pi(Z)= 1/(1+\exp(0.5Z_{1}+0.1Z_{4}))$.  The $Z$ are latent variables and the $X$ are observed. Hence,  when one specifies a linear outcome model and a logistic propensity score model using the $X$ as the covariates, both models are misspecified. In this scenario, $\Phi(X)=X$, and  $K=p=10$. Scenario A is a  good-overlap setting, since the distributions of $X$ are not quite different  between the two treatment groups, and   the exact balancing methods are applicable.
	
	The MB and MB2 methods and  other balancing methods show similar outputs.  They all successfully remove covariate imbalance and obtain almost identical ATE estimates. The RMSEs of MB, MB2, CAL, and EB are somewhat smaller than those of MDABW and CBPS. We conclude that Mahalanobis balancing  maintains the advantages of the exact balancing methods in Scenario A. Moreover, the computation cost of Mahalanobis balancing is less than $0.5\%$ than that of MDABW. This is because  MDABW   needs 1000 bootstraps for each possible  tuning parameter.

	In Scenario B,  we first generate $T\sim \text{Bernoulli}(0.5)$ for each observation. Set $p=10$. If $T=1$,  the ten-dimensional covariates are generated from ${X}\sim N({1}, {\Sigma_1})$, where   $Cov(X_j,X_k)=2^{-I(j \neq k)}$; if $T=0$, then the covariates are   generated from ${X}\sim N({1}, {\Sigma_0})$, where  $Cov(X_j,X_k)=I(j = k)$.  This data generation procedure allows us to delineate the discrepancy of covariate distributions between the treated and control groups: the  mean and variance  are the same,  but the interaction terms are  different. The potential outcomes are generated from  $(Y(1), Y(0))=(2\sum_{i=1}^{10}X_{i} + 2(X_{1}X_{2}+X_{2}X_{3}+\cdots+X_{9}X_{10}+X_{10}X_{1})+\epsilon_1, \sum_{i=1}^{10}X_{i} + (X_{1}X_{2}+X_{2}X_{3}+\cdots+X_{9}X_{10}+X_{10}X_{1})+\epsilon_0)$,
	where $\epsilon_j\sim N(0,1)$, $j=0,1$. The true value of ATE is calculated via simulation.  The first two  moments of $X$ are well balanced. In contrast, the interaction terms are strong confounders. We set $\Phi(X)=\{X_i, X_jX_k: 1 \leq i \leq 10, 1\leq j \leq k\leq 10\}$ and thus $K=65$.
	
	Because the covariance structures are  quite dissimilar in the treated and control groups,  both EB and CAL do not admit solutions, and thus  Scenario B is a bad-overlap setting. In comparison, MB is able to approximately balance the interaction terms.  The two imbalance measures of  MB  are substantially lower than those of MDABW and CBPS. Furthermore, MB  significantly improves the performance of  exact balancing and univariate approximate balancing for ATE estimation. It exhibits much smaller bias and RMSE than those of MDABW and CBPS. Nevertheless, MB2 performs unsatisfactorily because  the weight matrix $W_2$ is unstable due to considerable correlation among the interaction terms, and its  GMIM value is much larger than that of MB. We do not recommend MB2 in this situation.
	
	In Scenario C, the covariate means   are  different  in the two groups.  Set $p=10$. We first generate $T\sim \text{Bernoulli}(0.5)$ for each observation. If $T=1$,  the covariates are    generated from ${X}\sim N({1}, {\Sigma_1})$; otherwise,  ${X}\sim N({0}, {\Sigma_0})$. The covariance matrices $\Sigma_1$ and $\Sigma_0$ are the same as those in Scenario B. The potential outcomes are generated from  $(Y(1), Y(0))=(2\sum_{i=1}^{10}X_{i} +\epsilon_1, \sum_{i=1}^{10}X_{i} +\epsilon_0)$,
	where $\epsilon_j\sim N(0,1)$, $j=0,1$. The true ATE is 5.  In this scenario, we set $\Phi(X)=X$, and thus $K=10$. Scenario C is  a bad-overlap setting, because  the covariate distributions are quite different such that  EB and CAL are infeasible.   MDABW exhibits much less bias compared to CBPS, but it has larger standard deviation. The proposed MB and MB2 methods have  least biased ATE estimation,  smallest  RMSEs, and  smallest  GMIM values.
	
	In Scenario D,  the sample sizes are highly imbalanced. This situation is commonly seen in  cohort studies where the exposure is rare and the size of the control group  is very large.  We fix the expected  sample size of the treated group to be 50, and  the control group has a  sample size of roughly 950.  Set   $\Phi(X)=X$ and  $K=p=10$. The covariates are simulated by $X\sim N({1}, {I}_{10\times 10})$. The treatment indicator is simulated from   $T\sim Bernoulli(\pi(X))$ with $\pi(X)=1/\{1+19  \exp(\sum_{i=1}^{10}X_{i}-10)\}$. The potential outcomes are simulated from  $(Y(1), Y(0))=(2\sum_{i=1}^{10}X_{i} +\epsilon_1, \sum_{i=1}^{10}X_{i} +\epsilon_0)$,
	where $\epsilon_j\sim N(0,1)$, $j=0,1$.  The true ATE is 10. Again, this is a bad-overlap setting, where EB and CAL are infeasible. In contrast to Scenario C,  the ATE estimate of CBPS is less biased than that of MDABW, and the RMSE  is  smaller. MB and MB2 greatly outperform MDABW and CBPS in terms of bias, RMSE, and GMIM.

    In Scenario E, the distribution of the covariates is heavy-tailed. In specific, the covariate $X$ is generated by  $\log(X)\sim N({0}, {I}_{5\times 5})$. The treatment indicator is generated from $T\sim Bernoulli(\pi(X))$, where $\pi(X)=1/\{1+0.1 \exp(\sum_{i=1}^{5}X_{i}-5)\}$. The potential outcomes are simulated from  $(Y(1), Y(0))=(2\sum_{i=1}^{5}X_{i} +\epsilon_1, \sum_{i=1}^{5}X_{i} +\epsilon_0)$,
	where $\epsilon_j\sim N(0,1)$, $j=0,1$.  The true ATE is 8.24. Once again, this is a bad-overlap scenario, where the EB and CAL methods are not feasible. The CBPS estimate  is less biased than that of  MDABW, and the RMSE is smaller. Moreover, the MB and MB2 significantly outperform other methods.

	We remark that although CBPS is an exact balancing method, it is still applicable to Scenarios B, C,  D, and E, all of which are  bad-overlap settings. This is because it employs the generalized method of moments for parameter estimation, which still works even when the  moment constraints are not met exactly. In this sense, CBPS can be regarded as an approximate balancing method, though it does not control univariate dispersion of each covariate. The CBPS method has larger meanTASMD and GMIM values than the MDABW method in all scenarios.  Scenarios B, C,  D, and E suggest that the covariate distributions in each  group (that is, the conditional distributions of $X|T=1$ and $X|T=0$)  are still highly dissimilar to the distribution of $X$ in the population  after CBPS reweighting, leading to large estimation bias and RMSE values.

	The GMIM recognizes post-weighting multivariate imbalance  and is  predictive of the performance of balancing methods for treatment effect estimation. Therefore, we recommend to use GMIM instead of meanTASMD to assess residual covariate imbalance in the bad-overlap settings.

    We also implement the kernel-based covariate balancing method by \cite{wong2018kernel}, and the results are in column ``KERNEL'' in Table \ref{Table 1}. In Scenario A where the   model  setup is  similar to that in \cite{wong2018kernel}, the kernel-based covariate balancing method outperforms other methods. This result is not surprising because the nonparametric kernel method tends to be more robust to model misspecification. However, in all the bad-overlap scenarios (Scenarios B, C,  D, and E), the performance of the kernel-based method is not satisfactory. The slow convergence rate of the nonparametric method may explain the inferior performance of the  kernel-based method in  finite samples. Hence, while the kernel-based covariate balancing method may be preferred in some cases where model misspecification is a concern, it may not always be the optimal choice for achieving balance in observational studies especially when the overlap between two groups is poor, the sample sizes are imbalanced, or  the covariates are heavy-tailed.

	\subsection{Application}\label{sec5}

	\begin{table}[b]\center
		\caption{Data analysis of National Supported Work program.}
		\label{Table 3}
        \resizebox{\textwidth}{!}{
		\begin{tabular}{cccccccccc}
			\toprule
			& \multicolumn{1}{c}{{\textbf{Unad}}} & \multicolumn{1}{c}{\textbf{PS}} & \multicolumn{1}{c}{\textbf{EB}} &  \multicolumn{1}{c}{\textbf{CBPS}} & \multicolumn{1}{c}{\textbf{CAL}} & \multicolumn{1}{c}{\textbf{MB}} & \multicolumn{1}{c}{\textbf{MB2}} & \multicolumn{1}{c}{\textbf{MDABW}} & \multicolumn{1}{c}{\textbf{KERNEL}} \\
			\midrule
			\textbf{ATE} &-606.09 &635.29 &- &1329.66 &- &1529.37 &1516.03 &960.71 &141.87 \\
			\textbf{SE}
			&690.03 &1368.81 &- &1525.42 &- &2044.52 &2082.54 &1231.93 &1018.23\\
			\textbf{maxTASMD} &1.31 &0.35 &- &0.38 &- &0.01 &0.02 &0.12 &0.26\\
			\textbf{medTASMD} &0.34 &0.21 &- &0.19 &- &0.00 &0.02 &0.08 &0.12\\
			\textbf{meanTASMD} & 0.56 &0.21 &- &0.20 &- &0.00 &0.01 &0.08 &0.11\\
			{\textbf{GMIM}}
			&3.38 &0.49 &- &0.30 &- &0.00 &0.01 &0.05 &0.21\\
			\bottomrule
		\end{tabular}
        }
	\end{table}
	
	We revisit a dataset from the National Supported Work program \citep{dehejia1999causal}. The National Supported Work program is a labor training program implemented in the 1970s by providing work experience to  selected individuals.   The data consist of a National Supported Work experimental group with sample size $185$ and a nonexperimental comparison group from the Panel Study of Income Dynamics with sample size $429$. We regard them as the treated and control groups, respectively. The outcome variable $Y$ is the post-intervention earning measured in Year 1978.
	
	The covariates include four numeric covariates $age$, $education$, $earn1974$, $earn1975$, four binary variables  $married$, $black$, $nodegree$ and $hispanic$, and  sixteen interaction terms between the numeric covariates and the binary variables. In addition, we include the ratio $education/age$, which represents  possible nonlinear effect of $age$ at each level of $education$. Table  \ref{Table 3} shows the results. Here, maxTASMD and medTASMD are the Monte Carlo averages of the maximum and medium of $\text{TASMD}_{k}$, $k=1,\cdots, K$, respectively. 500 bootstrap resamples are used to calculate the standard error for all methods.
	
	This is a bad-overlap setting, since the EB and  CAL methods  fail to output ATE estimation. This can be  explained by  that the distribution of the covariate $education/age$ is highly disimilar between the two groups. The CBPS method substantially reduces univariate and multivariate imbalance compared to the unadjusted method. The PS method has similar performance. The MDABW and KERNEL methods perform better than CBPS in the sense of smaller residual univariate and multivariate imbalance after reweighting. In comparison, the proposed Mahalanobis balancing methods   produce most balanced weights as they have smallest TASMD and GMIM values, although they have  larger standard error. The Mahalanobis balancing methods are recommended when  balanced weights are in demand in  data analysis.

	\section{Concluding Remarks}
	In this paper, we proposed Mahalanobis balancing. It produces balanced weights by solving a convex optimization problem with a  quadratic constraint that represents  multivariate imbalance. We discussed the high-dimensional setting. Different choices of the basis functions were examined in the simulations. We compared  Mahalanobis balancing with  the exact balancing,  univariate approximate balancing, and  high-dimensional regularized balancing methods in extensive numerical studies, and found that Mahalanobis balancing generally led to more balanced weights and less biased ATE estimation.
	
	The proposed Mahalanobis balancing methods can be easily extended to the situation with multiple treatment arms. Other causal estimands, such as  average treatment effect on the treated and  average treatment effect on the control, can be also estimated using  Mahalanobis balancing, though some  modification is needed. For example, when the average treatment effect on the control is of interest, one needs to replace $\bar{\Phi}$ with the sample average of $\Phi(X)$ in the control group in Problem \eqref{9} to obtain Mahalanobis balancing  weights.
	
	Our method can be used to address the transportability and generalizability issues in data integration and data fusion.  Recently,  exact balancing methods were applied to these interesting problems \citep[e.g.][]{lee2023improving,Josey2021}. Nevertheless, the trial population and the target population usually exhibit high degree of discrepancy, and there can be  a large number of confounders. In these  situations,   exact balancing  is infeasible and Mahalanobis balancing  is recommended.
	
In the numerical studies, we  used entropy as the loss function. One may further investigate the performance of Mahalanobis balancing using other loss functions \citep[e.g.][]{chan2016globally,Josey2021}.  Moreover, one may consider other multivariate imbalance measures  instead of the Mahalanobis metric to improve robustness, e.g., energy distance \citep{HulinMak2020} and  kernel distance \citep{zhu2018kernel}. However, it is beyond the scope of this paper and we do not further pursue it here.

	\section*{Supplementary Material}
	The online Supplementary Material includes discussion of the high-dimensional setting,  proofs and additional simulation studies.

\section*{Acknowledgements}
Ying Yan's research is  supported by the National Natural Science Foundation of China (Grant No. 11901599).

	\bibliographystyle{biom.bst}
	\begin{spacing}{1.2}
		\bibliography{paper-ref.bib}
	\end{spacing}
 
\vspace{1cm}

\noindent \author{Ying Yan \\
\thanks{Corresponding Author. \href{mailto: yanying7@mail.sysu.edu.cn}{yanying7@mail.sysu.edu.cn}} \\
School of Mathematics, Sun Yat-sen University, Guangzhou, China, 510275\\
}

\end{document}


\def\spacingset#1{\renewcommand{\baselinestretch}%
		{#1}\small\normalsize} \spacingset{1}

	
	\if1\blind
	{
		\title{\bf Supplementary Material for ``Mahalanobis balancing: a multivariate perspective on approximate covariate balancing''}
		\author{Yimin Dai \hspace{.2cm}\\
			School of Mathematics, Sun Yat-sen University, Guangzhou, China, 510275\\
			and \\
			Ying Yan\thanks{Corresponding Author. \href{mailto: yanying7@mail.sysu.edu.cn}{yanying7@mail.sysu.edu.cn}} \\
			School of Mathematics, Sun Yat-sen University, Guangzhou, China, 510275 \\
		}
\date{}
		\maketitle
	} \fi
	
	\if0\blind
	{
		\bigskip
		\bigskip
		\bigskip
		\begin{center}
			{\LARGE\bf Supplementary Material for ``Mahalanobis balancing: a multivariate perspective on approximate covariate balancing''}
		\end{center}
		\medskip
	} \fi
	
	\bigskip
	
	\appendix

	\spacingset{1.5} 

	\section{High-dimensional Mahalanobis balancing}\label{sec3.4}
	
	Asymptotic theory of Mahalanobis balancing becomes much more difficult  in the ultra high-dimensional setting with $K\gg n$. For example, the general high-dimensional regularized M-estimation theory  \citep{Wainwright2019}  does not directly apply  because the   $\ell_2$ norm regularizer  in the dual problem (8) is not decomposable.  
	
	When the true propensity score model is not sparse,  Mahalanobis balancing outperforms  the   high-dimensional regularized balancing methods  \citep{athey2016approximate, ning2020robust,tan2020model,tan2020regularized} and  univariate approximate balancing   \citep{wang2020minimal}  in the simulation studies. Nevertheless, when the true propensity score model is sparse, the performance of Mahalanobis balancing is not entirely satisfactory compared to high-dimensional regularized balancing. This is not surprising because  high-dimensional regularized balancing   commonly selects a sparse subset of basis functions by $\ell_1$ or elastic net  regularization. In contrast, the   $\ell_2$ regularization in the dual problem (8) suggests that Mahalanobis balancing  does not automatically perform variable selection.
	
	In the sparse setting, we propose   the high-dimensional Mahalanobis balancing method, which is a modified version of Mahalanobis balancing. It differs from  Mahalanobis balancing by performing selection of basis functions before producing balanced weights.
	
	Recall that we write $X=(X_{1}, \cdots, X_{p})^\top$. In the following high-dimensional setting, we use the trivial feature mapping: $\Phi(X)=X$, and thus $K=p$. In high-dimensional Mahalanobis balancing, we select a subset  of the covariates $X$ with dimension $K_0$, where $K_0\leq p$. The following algorithm gives a principled way  for subset selection. Let $ASMD_j$ be the  absolute standardized mean difference for $X_j$, $j=1,\cdots, p$. Rank $X_{1}, \cdots, X_{p}$ as  $X_{(1)}, \cdots, X_{(p)}$ such that $X_{(1)}$ has the largest ASMD value, $X_{(2)}$ has the second largest ASMD value, and so forth.
	
	\begin{algo}  High-dimensional Mahalanobis balancing
		\label{algorithm 1}
		\begin{tabbing}
			\qquad  For each step $j\in\{1,\cdots,p\}$: \\
			\qquad \qquad  Apply Algorithm 1 to obtain  the MB weights $\hat{w}^{*MB}_i$ using   $\Phi(X) = (X_{(1)},\cdots, X_{(j)})^\top$; \\
			\qquad \qquad  Calculate  $\text{GMIM}_1^w$ in
			(6) using  the $\hat{w}_i^{MB}$, and define $\text{GMIM}_{1,j}^w=\text{GMIM}_1^w/j$; \\
			\qquad \qquad Add $(j,\text{GMIM}_{1,j}^w)$ to the scatter  plot;\\
			\qquad \qquad Observe whether there is a kink at  $(j,\text{GMIM}_{1,j}^w)$:\\
			\qquad \qquad \qquad  If no, let $j=j+1$;\\
			\qquad \qquad \qquad  If yes,  stop and output $K_0=j-1$.\\
			\qquad If no kink is observed, output $K_0=p$.
		\end{tabbing}
	\end{algo}
	We then obtain $\Phi(X)=(X_{(1)},\cdots, X_{(K_0)})^\top$ and   the MB  weights at Step $K_0$. Similarly, we obtain the  MB weights for the control group.  The average treatment effect is then estimated by the weighted difference of the outcomes.
	
	The rationale of Algorithm S1 hinges on the adjusted multivariate imbalance measure $\text{GMIM}_{1,j}^w$. It represents the average contribution of the $j$ most univariate imbalanced covariates  to the residual multivariate imbalance after MB weighting. If it remains stable as $j$ increases, then MB is  capable of controlling  multivariate imbalance. However, if there is a kink at  Step $j$, it implies that   adding the $j$th largest imbalanced covariate greatly increases the multivariate imbalance measure, implying that  MB starts to lose control of overall imbalance at Step $j$.  Therefore, we  stop and  choose the outputs at Step $j-1$. The kink  usually occurs when $K_0= O(\sqrt{p})$ in the numerical studies. If there is no   kink for all $j=1,\cdots, p$, it suggests that the performance of MB is acceptable even  if all $p$ covariates are included. One may choose $K_0=p$  in this situation, or $K_0=\sqrt{p}$ if a small subset of covariates is preferred. Our numerical experience  reveals that   high-dimensional Mahalanobis balancing  substantially improves  the performance of Mahalanobis balancing  in the sparse setting.
	
	Algorithm S1 tends to select covariates that exhibit large univariate imbalance.  Algorithm S1 may not work well when the unselected imbalanced covariates   are correlated with the outcomes.   To alleviate this issue, one may construct a bias-corrected version for treatment effect estimation by augmenting  the MB-based ATE estimator with the  outcome models.
	
	\subsection{Simulation for the high-dimensional settings}

	\begin{table}[t]
		\caption{Simulation outputs in the high-dimensional settings.}
		\label{Table 2}
		\resizebox{\textwidth}{!}{
			\begin{threeparttable}
				\begin{tabular}{ccccccccccc}
					\toprule
					& \multicolumn{1}{c}{{\textbf{Unad}}} & \multicolumn{1}{c}{\textbf{MB}} & \multicolumn{1}{c}{\textbf{kernelMB}}& \multicolumn{1}{c}{\textbf{hdMB}} & \multicolumn{1}{c}{\textbf{MDABW}} &
					\multicolumn{1}{c}{\textbf{RCAL1}} & \multicolumn{1}{c}{\textbf{RCAL2}}&
					\multicolumn{1}{c}{\textbf{ARB}} &
					\multicolumn{1}{c}{\textbf{hdCBPS}} \\
					\midrule
					\multicolumn{8}{l}{ Scenario  F with $(n,p)=(200,100)$:  sparse propensity score model.}\\
					\multicolumn{1}{c}{{\textbf{Bias}}}
					&-3.44&-0.85&-0.92&{-0.01}&-2.13&-0.48&-0.47&-0.22&-0.19
					\\
					\multicolumn{1}{c}{\textbf{SD}}
					&0.36&0.30&0.29&0.32&1.11&0.32&0.31&{0.27}&0.37
					\\
					\multicolumn{1}{c}{\textbf{RMSE}}
					&3.47&0.90 &0.96&{0.32}&2.39&0.58&0.57&0.35&0.38
					\\
					\multicolumn{1}{c}{\textbf{meanASMD}}
					&0.16&{0.07} &0.08&0.14&0.15&0.15&0.16&-&-
					\\
					\multicolumn{1}{c}{\textbf{GMIM}}
					&4.35&{0.53} &0.71&$4.19 (0.00)$\tnote{*}&2.12&2.17&3.43&-&-
					\\
					
					\\
					\multicolumn{8}{l}{ Scenario F with $(n,p)=(200,500)$:  sparse propensity score model.}\\
					\multicolumn{1}{c}{{\textbf{Bias}}}
					&-3.46&-2.46&-2.49&{0.00}&-3.43&-0.76&-0.76&-0.43&-0.38\\
					\multicolumn{1}{c}{\textbf{SD}}
					&0.37&0.33&0.33&0.35&0.37&0.29&0.29&{0.25}&0.40\\
					\multicolumn{1}{c}{\textbf{RMSE}}
					&3.48 &2.48&2.52&{0.35}&3.45&0.81&0.81&0.48&0.55
					\\
					\multicolumn{1}{c}{\textbf{meanASMD}}
					&0.12&{0.10} &0.10&0.22&0.12&0.12&0.13&-&-\\
					
					\multicolumn{1}{c}{\textbf{GMIM}}
					&7.32&{5.15} &5.30&$36.26(0.00)$&7.28&7.36&7.43&-&-
					\\
					\\
					\multicolumn{8}{l}{ Scenario G with $(n,p)=(200,100)$:  dense propensity score model.}\\
					\multicolumn{1}{c}{{\textbf{Bias}}}
					&-1.35&-0.35&{-0.31}&-0.55&-0.91&-0.76&-0.75&-0.54&-0.86\\
					\multicolumn{1}{c}{\textbf{SD}}
					&0.21&0.23&{0.22}&0.39&0.39&0.30&0.30&0.23&0.23\\
					\multicolumn{1}{c}{\textbf{RMSE}}
					&1.36&0.42 &{0.38}&0.68&0.99&0.82&0.81&0.59&0.89\\
					\multicolumn{1}{c}{\textbf{meanASMD}}
					&0.20&{0.07} &0.08&0.16&0.15&0.20&0.20&-&-\\
					\multicolumn{1}{c}{\textbf{GMIM}}
					&3.53&{0.64} &0.82&$4.87(0.00)$&1.86&3.03&3.20&-&-
					\\
					\\
					\multicolumn{8}{l}{ Scenario G  with $(n,p)=(200,500)$:  dense propensity score model.}\\
					\multicolumn{1}{c}{{\textbf{Bias}}}
					&-0.33&-0.23&{-0.22}&-0.27&-0.33&-0.32&-0.33&-0.27&-0.34\\
					\multicolumn{1}{c}{\textbf{SD}}
					&0.17&0.17&0.17&0.33&{0.17}&0.18&0.18&0.19&0.17\\
					\multicolumn{1}{c}{\textbf{RMSE}}
					&0.37&0.29 &{0.28}&0.43&0.37&0.37&0.37&0.33&0.39
					\\
					\multicolumn{1}{c}{\textbf{meanASMD}}
					&0.12&{0.10} &0.10&0.22&0.12&0.18&0.13&-&-\\
					\multicolumn{1}{c}{\textbf{GMIM}}
					&7.29&{5.13} &5.28&$34.55(0.00)$&7.26&7.38&7.52&-&-\\
					\bottomrule
				\end{tabular}
				\begin{tablenotes}
					\footnotesize
					\normalsize
					
					\item[*] If we use $\Phi(X)=X$ to compute the GMIM for the hdMB method, then $\text{GMIM}^w=4.19$; if we use $\Phi(X)=(X_{(1)},\cdots, X_{(K_0)})^\top$ with $K_0$ determined
					by Algorithm S1, then $\text{GMIM}^w=0.00$.	
				\end{tablenotes}
			\end{threeparttable}
		}
		
	\end{table}

	In this subsection, we assess the performance of three Mahalanobis balancing methods in the high-dimensional setting, and compare them with high-dimensional regularized balancing methods, including
	two versions of regularized calibrated estimation  (RCAL1 and RCAL2) \citep{tan2020model,tan2020regularized}  implemented with the R package \texttt{RCAL},  approximately  residual balancing (ARB) \citep{athey2016approximate} implemented with the R package \texttt{balanceHD}, and high-dimensional covariate balancing propensity score (hdCBPS) \citep{ning2020robust}  implemented with the R package \texttt{CBPS}. We also report  the unadjusted ATE estimator using the simple difference in the outcomes (Unad).

The  MB method was described in Section 4.  We also consider  kernel-based Mahalanobis balancing (kernelMB),  where we use $\Phi(X)=(K(X,X_1),...,K(X,X_n))^\top$ with $K(\cdot,\cdot)$ set to be the Gaussian kernel. High-dimensional Mahalanobis balancing  (hdMB) was described in Section A, where $\Phi(X)=(X_{(1)},\cdots, X_{(K_0)})^\top$ and $K_0$ is selected by
	Algorithm S1. We compare them with four  high-dimensional regularized balancing methods (RCAL1, RCAL2, ARB, hdCBPS). We do not report imbalance measures for   ARB and hdCBPS, because they  utilize outcome information in weight construction. For all methods except kernelMB, we set $\Phi(X)=X$ and thus $K=p$. In each scenario, we consider $(n,p)=(200,100)$ or $(200,500)$.    Table \ref{Table 2} summarizes the outputs.

	In Scenario F,  we generate the covariates from $X \sim N({0}, {\Sigma})$ where $Cov(X_j,X_k)=2^{-I(j \neq k)}$.   The treatment assignment is generated by $T  \sim Bernoulli(\pi(X))$ with $ \pi(X)=1/\{1+\exp({X_{1}}+\sum_{j=2}^{6}X_{j}/2 )\}$. Therefore, treatment assignment is correlated with a sparse subset of covariates. The outcome  is generated from $Y=T(\sum_{j=1}^{5}X_{j})+(1-T)(\sum_{j=1}^{5}X_{j}/2) +\epsilon$.
	
	Both MB and kernelMB have low bias, small standard deviation, and small imbalance measures when $(n,p)=(200,100)$, implying that they are capable of approximately balancing  covariates  when the covariate dimension is substantially smaller than the sample size. When  $(n,p)=(200,500)$, MB and kernelMB exhibit quite large GMIM values, and  their biases are  larger than those of the high-dimensional regularized balancing methods but are smaller than that of MDABW. The hdMB method    shrinks the GMIM  for the selected covariates to be zero, but its GMIM  value is very large when all covariates are used to calculate this measure. It  achieves lower bias and RMSE compared to  MDABW and   all   high-dimensional regularized balancing methods (RCAL1, RCAL2, ARB, hdCBPS).  In conclusion, the hdMB method is recommended over MB and kernelMB when the true propensity score model is sparse. It is competitive to the existing  high-dimensional regularized balancing methods.
	
	In Scenario G,  the data generation procedure of the covariates  is the same as the one in Scenario F.  The treatment index is generated by $T  \sim Bernoulli(\pi(X))$ with $ \pi(X)= 1/\{1+\exp(X_{1}+\sum_{j=2}^{5}X_{j}/{2}+ 10 \sum_{j=6}^{p}X_{j}/p) \}$. Note that treatment assignment is correlated with all covariates, and thus the propensity score model is dense. The outcome  is generated from $Y=T(10\sum_{j=1}^{p}X_{j}/p)+(1-T)(5\sum_{j=1}^{p}X_{j}/p) +\epsilon$.

	Both MB and kernelMB have lower bias and smaller standard deviation compared to hdMB and MDABW.   The   high-dimensional regularized balancing methods (RCAL1, RCAL2, ARB, hdCBPS) have  larger biases than MB and kernelMB.  Both  MB and kernelMB produce much smaller GMIM values   than other methods when $(n,p)=(200,100)$. The  MB and kernelMB methods are  recommended since they enjoy lowest biases, smallest RMSEs, and smallest GMIMs. The hdMB method is not recommended in this dense scenario.
	
	\subsection{Additional Simulation for Model Misspecification}
	
	We consider high-dimensional situation with $(n,p)=(200,100)$ when the outcome model is misspecified as a linear model. The data generation procedure of the covariates  is the same as the one in Scenario F.
	
	In the Scenario M1, we consider a sparse propensity score model $$\pi(X_{i})= \frac{1}{1+\exp (X_{i 1}+\sum_{j=2}^{6}X_{i j}/{2})},$$ and  a nonlinear outcome model $$Y_{i}=2T_{i}\left(\sum_{j=1}^{6}X_{ij} + \sum_{j=1}^{6}X^2_{ij}\right)+(1-T_{i})\left(\sum_{j=1}^{6}X_{ij} + \sum_{j=1}^{6}X^2_{ij}\right)+\epsilon_{i},$$
	where $\epsilon_{i}$ is  a standard normal random variable.
	
	In the Scenario M2, we consider a dense propensity score model
	$$\pi(X_{i})=\frac{1}{1+\exp (X_{i 1}+\sum_{j=2}^{5}X_{i j}/{2}+  \sum_{j=6}^{100}{ X_{i j}}/10)},$$ and  a nonlinear outcome model $$Y_{i}=T_{i}\left(\sum_{j=1}^{100}X_{ij} + \sum_{j=1}^{50}X^2_{ij}\right)/10
	+(1-T_{i})\left(\sum_{j=1}^{100}X_{ij} + \sum_{j=1}^{50}X^2_{ij}\right)/20+\epsilon_{i}.$$

	Table \ref{TablesMis} gives the results. We observe that MB and hdMB have best performance in Scenario M1, and kernelMB has best performance in Scenario M2.  Mahalanobis balancing   does not use information of the outcomes, and thus is  robust in these scenarios.
	
	\begin{table}[t]
		\caption{Model misspecification scenarios.}
		\label{TablesMis}
		\resizebox{\textwidth}{!}{
			\begin{tabular}{cccccccccccc}
				\toprule
				\textbf{Scenario M1}& \multicolumn{1}{c}{{\textbf{Bias}}} & \multicolumn{1}{c}{\textbf{SD}}& \multicolumn{1}{c}{\textbf{RMSE}} & \multicolumn{1}{c}{\textbf{maxASMD}} & \multicolumn{1}{c}{\textbf{meanASMD}} & \multicolumn{1}{c}{\textbf{medianASMD}} &\multicolumn{1}{c}{\textbf{GMIM}} \\
				\midrule
				\textbf{Unad} &-3.45&0.50&3.49&1.54&0.16&0.10&3.22\\
				{\textbf{MB}}  &-1.47&0.61&1.59&0.58&0.06&0.05&0.52\\
				{\textbf{kernelMB}} &-1.52&0.56&1.62&0.80&0.08&0.06&0.71\\
				{\textbf{hdMB}}
				&-0.72&0.73&1.03&1.25&0.14&0.10&4.19\\
				\textbf{MDABW}  &-2.50&0.96&2.68&1.54&0.14&0.11&2.12\\
				\textbf{RCAL1}    &-1.68&0.60&1.79&1.10&0.15&0.11&2.17\\
				\textbf{RCAL2}
				&-1.67&0.60&1.78&1.49&0.15&0.11&2.76
				\\
				\textbf{ARB}  &-1.22&0.55&1.34&-&-&-&-\\
				\textbf{hdCBPS}  &-1.68&0.61&1.80&-&-&-&-\\
				
				\midrule
				\textbf{Scenario M2}& \multicolumn{1}{c}{{\textbf{Bias}}} & \multicolumn{1}{c}{\textbf{SD}} & \multicolumn{1}{c}{\textbf{RMSE}} & \multicolumn{1}{c}{\textbf{maxASMD}} & \multicolumn{1}{c}{\textbf{meanASMD}} & \multicolumn{1}{c}{\textbf{medianASMD}} &\multicolumn{1}{c}{\textbf{GMIM}}\\
				\midrule
				\textbf{Unad} &-1.35&0.36&1.40&1.30&0.20&0.17&3.53\\
				{\textbf{MB}} &-0.50&0.48&0.68&0.56&0.07&0.06&0.64\\
				{\textbf{kernelMB}} &-0.35&0.37&0.51&0.74&0.08&0.07&0.82\\
				{\textbf{hdMB}}
				&-0.70&0.69&0.98&1.28&0.16&0.11&4.87\\
				\textbf{MDABW} &-1.02&0.51&1.14&1.30&0.15&0.14&1.86\\
				\textbf{RCAL1}   &-1.20&0.41&1.27&0.93&0.20&0.18&3.03\\
				\textbf{RCAL2}
				&-1.19&0.41&1.26&1.24&0.20&0.18&3.18
				\\
				\textbf{ARB}  &-0.78&0.45&0.90&-&-&-&-\\
				\textbf{hdCBPS}  &-1.21&0.42&1.28&-&-&-&-\\
				\bottomrule
			\end{tabular}
		}
	\end{table}
	
	\clearpage
	
	\section{Lemma and Proof}
	
	Lemma 1 is similar to Lemma 8 by  \cite{athey2016approximate}, but the assumptions are  different.
	\begin{lemma}(Weight Behaviour I)
		Suppose that $\sum_{\left\{i: T_{i}=1\right\}} w_{i}^{3} / \|w\|_{2}^{3} = o_p(1)$ and that the $\varepsilon_{1i}$ are sub-Gaussian. Then, as $n\rightarrow \infty$,
		\begin{equation*}
			\frac{1}{\|w\|_{2}} \sum_{\left\{i: T_{i}=1\right\}} w_{i} \varepsilon_{1i} \stackrel{d}{\rightarrow} {N}\left(0, \sigma^{2}\right),
			\label{noise-beha}
		\end{equation*}
		where $\sigma^2=Var(\varepsilon_{1i})$.
	\end{lemma}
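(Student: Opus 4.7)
My plan is to treat this as a central limit theorem for a triangular array of weighted noise terms, with the twist that the weights $w_i$ are themselves random and data-dependent. Set
\[
X_{ni}=\frac{w_i\,\varepsilon_{1i}}{\|w\|_2},\qquad i:T_i=1,
\]
so that the statement becomes asymptotic normality of $S_n=\sum_{i:T_i=1}X_{ni}$. Under the framework of the paper, the weights $w_i$ are constructed from the covariates and treatment indicators and are therefore independent of the outcome errors $\varepsilon_{1i}$; I would condition on the $\sigma$-field $\mathcal{W}$ generated by $\{(T_i,X_i)\}$, under which the $w_i$ become deterministic and the $\varepsilon_{1i}$ retain their i.i.d., mean-zero, sub-Gaussian distribution.

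Conditionally on $\mathcal{W}$, the summands $X_{ni}$ are independent with $E[X_{ni}\mid\mathcal{W}]=0$ and $\mathrm{Var}(X_{ni}\mid\mathcal{W})=\sigma^2 w_i^2/\|w\|_2^2$, so the conditional variances telescope: $\sum_{i:T_i=1}\mathrm{Var}(X_{ni}\mid\mathcal{W})=\sigma^2$ exactly. I would then verify the Lyapunov condition with $\delta=1$: because $\varepsilon_{1i}$ is sub-Gaussian, $m_3:=E[|\varepsilon_{1i}|^3]<\infty$, so
\[
\sum_{\{i:T_i=1\}} E\bigl[|X_{ni}|^3 \bigm|\mathcal{W}\bigr]
= m_3 \sum_{\{i:T_i=1\}} \frac{w_i^3}{\|w\|_2^3}= o_p(1)
\]
by the lemma's hypothesis. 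This is exactly the input the Lyapunov CLT needs.

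To conclude, I would argue via characteristic functions. Using the sub-Gaussian expansion $\log E[e^{is\varepsilon_{1i}}] = -s^2\sigma^2/2 + r(s)$ with $|r(s)|\le C|s|^3$ in a neighbourhood of $0$, I get
\[
\log E\bigl[e^{itS_n}\bigm|\mathcal{W}\bigr]
= -\tfrac{t^2\sigma^2}{2}\sum_{i:T_i=1}\frac{w_i^2}{\|w\|_2^2}
+\sum_{i:T_i=1} r\!\left(\frac{tw_i}{\|w\|_2}\right)
= -\tfrac{t^2\sigma^2}{2} + R_n(t),
\]
where $|R_n(t)|\le C|t|^3 \sum w_i^3/\|w\|_2^3 = o_p(1)$. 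Hence $E[e^{itS_n}\mid\mathcal{W}] \to e^{-t^2\sigma^2/2}$ in probability, and since the conditional characteristic function is uniformly bounded by $1$, bounded convergence yields $E[e^{itS_n}]\to e^{-t^2\sigma^2/2}$, which by Lévy's continuity theorem gives the claimed convergence in distribution.

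The main obstacle is the conditioning step: the $o_p(1)$ hypothesis is an unconditional statement about random weights, whereas the Lyapunov/Lindeberg machinery is usually stated for deterministic arrays. Passing cleanly from the in-probability bound on $\sum w_i^3/\|w\|_2^3$ to unconditional weak convergence of $S_n$ requires the characteristic-function-plus-bounded-convergence argument above (rather than a direct invocation of the classical CLT). Verifying that the weight construction in the paper indeed renders $w$ measurable with respect to a $\sigma$-field under which the $\varepsilon_{1i}$'s are still i.i.d.\ sub-Gaussian is the one structural assumption I would make explicit before executing the calculation.
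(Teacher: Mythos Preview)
Your proposal is correct and follows essentially the same approach as the paper: condition on the weight-generating information, compute the conditional mean and variance of $\sum w_i\varepsilon_{1i}$, use sub-Gaussianity to bound the third absolute moment, and invoke the Lyapunov CLT via the hypothesis $\sum w_i^3/\|w\|_2^3=o_p(1)$. The paper simply asserts that the Lyapunov condition is verified and applies the CLT directly, whereas you go further and supply the characteristic-function-plus-bounded-convergence argument to pass from the conditional (random) Lyapunov bound to unconditional weak convergence---this extra step is a genuine refinement that the paper glosses over, but the underlying strategy is identical.
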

	
	\begin{proof}
		Since the MB weights do not utilize the outcomes, the $\varepsilon_{1i}$ are independent of the $w_{i}$ given the $X_{i}$. Therefore,
		$$
		E\{\sum_{\{i: T_{i}=1\}} w_{i} \varepsilon_{1i} \mid w_{i}, T_i=1\}=0 \text { and } \operatorname{Var}\{\sum_{\{i: T_{i}=1\}} w_{i} \varepsilon_{1i} \mid w_{i}, T_i=1\}=\sigma^{2}\|w\|_{2}^{2}.
		$$
		Since the  $\varepsilon_{1i}$ are sub-Gaussian,
		$$
		E\{\sum_{\{i: T_{i}=1\}}(w_{i} \varepsilon_{1i})^{3} \mid w_{i}, T_i=1\}\leq C \sum_{\{i: T_{i}=1\}} w_{i}^{3}
		$$
		for some positive constant $C$. Therefore,
		$$
		\frac{E\{\sum_{\{i: T_{i}=1\}}(w_{i} \varepsilon_{1i})^{3} \mid w_{i}, T_i=1\}}
		{{Var}\{\sum_{\left\{i: T_{i}=1\right\}} w_{i} \varepsilon_{1i} \mid  w_{i}, T_i=1\}^{3 / 2}}
		=O\left(\sum_{\left\{i: T_{i}=1\right\}} w_{i}^{3} / \|w\|_{2}^{3}\right)=o_{p}(1).
		$$
		Hence,   the Lyapunov condition is verified, and the proof is completed by applying   Lyapunov's central limit theorem.
	\end{proof}
	
	The next lemma asserts  that the MB weights  are stable  when the entropy loss function is employed.
	\begin{lemma}(Weight Behaviour II)
		Suppose that  the entropy loss function is used. Assume that $E\{\exp(a\theta^{\top}W^{1 / 2}[\Phi(X)-E\{\Phi(X) \mid T=1 \} ])\mid T=1\} \leq K_a$ for all $\theta \in \Theta$, where $K_a$ is some positive constant, $a = 1,2,3$. Assume that  $\theta^{\top}W^{1 / 2}\left(\Phi\left(X\right)-E\{\Phi\left(X \right)\mid T=1\} \right)$ is sub-Gaussian. Then   the following properties  hold for the MB weights:
		\begin{eqnarray*}
			\sum_{i:T_{i}=1}(\hat{w_i}^{MB})^a&=& O_p(n^{1-a}),\ \ \text{ for } a = 1,2,3;\\
			\max_{i:T_{i}=1}\hat{w_i}^{MB}&=& O_p(n^{-1/2}).
		\end{eqnarray*}
	\end{lemma}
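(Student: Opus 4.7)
The plan is to exploit the closed-form dual solution of the entropy-regularized MB problem. Under entropy loss, the KKT conditions yield that the optimal weights admit the exponential tilting form
\begin{equation*}
\hat{w}_i^{MB} \;=\; \frac{\exp(\hat\theta^{\top} W^{1/2}\Phi(X_i))}{\sum_{j:T_j=1}\exp(\hat\theta^{\top} W^{1/2}\Phi(X_j))},\qquad i:T_i=1,
\end{equation*}
for a data-driven dual variable $\hat\theta$. The guiding intuition is that if $\hat\theta$ is tight, every weight is of order $1/n$ up to a bounded exponential factor, from which all four rates follow mechanically by invoking the MGF bounds in the hypothesis.

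\emph{Step 1: Tightness of $\hat\theta$.} Using standard M-estimator arguments (convexity of the dual criterion, pointwise convergence delivered by the MGF assumption with $a=1$, and identification at $\theta^{*}=0$ thanks to the centering of $\Phi(X)-E\{\Phi(X)\mid T=1\}$), I would establish $\hat\theta = o_p(1)$, in particular $\hat\theta = O_p(1)$. A uniform weak law of large numbers over $\Theta$ combined with strict convexity of the dual objective delivers this step, in the same spirit as consistency proofs for exponential-tilting empirical-likelihood estimators.

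\emph{Step 2: Normalizing constant.} By the MGF bound for $a=1$ and a weak law of large numbers restricted to treated units, $n_{1}^{-1}\sum_{j:T_j=1}\exp(\hat\theta^{\top} W^{1/2}\Phi(X_j))$ is bounded above and bounded away from zero in probability, so the denominator in the display above is of order $n$ with probability tending to one.

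\emph{Step 3: Sum of $a$-th powers.} I would rewrite
\begin{equation*}
\sum_{i:T_i=1}(\hat{w}_i^{MB})^{a} \;=\; \frac{\sum_{i:T_i=1}\exp(a\hat\theta^{\top} W^{1/2}\Phi(X_i))}{\Bigl(\sum_{j:T_j=1}\exp(\hat\theta^{\top} W^{1/2}\Phi(X_j))\Bigr)^{a}},
\end{equation*}
and bound the numerator by $O_p(n)$ using the MGF hypothesis for $a=1,2,3$, while the denominator is of order $n^{a}$ by Step 2. The quotient is therefore $O_p(n^{1-a})$, establishing the first conclusion for all three values of $a$; the case $a=1$ is of course automatic if weights are normalized to sum to one.

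\emph{Step 4: Maximum weight.} The sub-Gaussian hypothesis, together with the standard maximal inequality for sub-Gaussian variables, yields
\begin{equation*}
\max_{i:T_i=1}\bigl|\hat\theta^{\top} W^{1/2}\bigl(\Phi(X_i)-E\{\Phi(X)\mid T=1\}\bigr)\bigr| \;=\; O_p\bigl(\sqrt{\log n}\bigr),
\end{equation*}
and hence, combining with Step 2,
\begin{equation*}
\max_{i:T_i=1}\hat{w}_i^{MB} \;\leq\; \frac{\exp(O_p(\sqrt{\log n}))}{O_p(n)} \;=\; O_p(n^{-1/2}),
\end{equation*}
since $n^{-1}\exp(C\sqrt{\log n})$ is eventually dominated by $n^{-1/2}$ for any fixed $C>0$.

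The main obstacle will be Step 1: rigorously establishing tightness of the dual optimizer $\hat\theta$. This demands a uniform law of large numbers for the sample dual criterion on $\Theta$ together with a careful use of strict convexity to prevent boundary escape. Once Step 1 is secured, Steps 2--4 reduce to routine applications of Markov's inequality and the supplied MGF and sub-Gaussian tail bounds.
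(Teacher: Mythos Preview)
Your Steps~2--4 are essentially the paper's argument, but Step~1 is both incorrectly justified and, in the paper's treatment, unnecessary. The claim that the population dual minimizer is $\theta^{*}=0$ ``thanks to the centering of $\Phi(X)-E\{\Phi(X)\mid T=1\}$'' is wrong: the MB dual problem centers around the \emph{overall} sample mean $\bar\Phi$, not the treated-group mean, and its population minimizer is determined by the propensity score (this is exactly the content of Lemma~3 in the paper, where $\theta^{*}$ satisfies $1/\pi(X)=C\,\tilde f^{*\prime}(\theta^{*\top}W^{1/2}(\Phi(X)-\bar\Phi))$). So your M-estimator consistency argument does not identify the right limit, and without it your lower bound on the denominator in Step~2 is unsupported.

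The paper sidesteps this entirely with a Jensen trick. Writing the denominator as $n_{1}E\{\exp(\hat\theta^{\top}W^{1/2}\Phi(X))\mid T=1\}(1+o_p(1))$ and applying Jensen's inequality (convexity of $\exp$) to lower-bound that expectation by $\exp\bigl(\hat\theta^{\top}W^{1/2}E\{\Phi(X)\mid T=1\}\bigr)$, one gets directly
\[
\hat w_i^{MB}\;\le\;\frac{1}{n_1}\exp\!\Bigl(\hat\theta^{\top}W^{1/2}\bigl[\Phi(X_i)-E\{\Phi(X)\mid T=1\}\bigr]\Bigr)(1+o_p(1)).
\]
The point is that the potentially unbounded factor $\exp(\hat\theta^{\top}W^{1/2}E\{\Phi(X)\mid T=1\})$ cancels between numerator and denominator, leaving only the centered exponential, whose moments are controlled \emph{uniformly in $\theta\in\Theta$} by the MGF hypothesis. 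No tightness of $\hat\theta$ is needed. From there your Step~3 (sum the $a$-th powers, invoke $K_a$) and Step~4 (sub-Gaussian maximal inequality) go through exactly as you wrote and match the paper's proof.
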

	
	\begin{proof}
		When the entropy loss function is employed,
		\begin{eqnarray*}
			\hat{w_i}^{MB}  &=& \frac{\exp\left\{\hat{\theta}^\top {W}^{1/2}\Phi(X_i)\right\}}{\sum_{j:T_j=1}\exp\left\{\hat{\theta}^\top {W}^{1/2}\Phi(X_j)\right\}}\\
			&=&  \frac{\exp\left\{\hat{\theta}^{\top} W^{1 / 2}\Phi\left(X_{i}\right)\right\}}{n_1E\left[\exp\left\{\hat{\theta}^{\top} W^{1 / 2}\Phi\left(X\right)\right\}\mid T=1 \right]}(1 + o_p(1))\\
			&\leq& \frac{\exp\left\{\hat{\theta}^{\top} W^{1 / 2}\Phi\left(X_{i}\right)\right\}}
			{n_1\exp\left(\hat{\theta}^{\top} W^{1 / 2}E\left\{\Phi\left(X\right)\mid T=1 \right\}\right)}(1 + o_p(1))\\
			& =&\frac{1}{n_1}\exp\left(\hat{\theta}^{\top} W^{1 / 2}\left[\Phi\left(X_{i}\right)-E\left\{\Phi\left(X\right)\mid T=1 \right\}\right]\right)(1 + o_p(1))\\
			&=& O_p(n^{-1}).
		\end{eqnarray*}
		Therefore,
		\begin{eqnarray*}
			\sum_{i:T_{i}=1}(\hat{w_i}^{MB})^a &=&
			\frac{1}{{n_1}^{a}}\sum_{i: T_{i}=1} \exp\left(a\hat{\theta}^{\top} W^{1 / 2}\left[\Phi\left(X_{i}\right)-E\left\{\Phi\left(X\right)\mid T=1 \right\}\right]\right)(1 + o_p(1))\\
			& =& n_1^{1-a} E\{\exp(a\theta^{\top}W^{1 / 2}[\Phi(X)-E\{\Phi(X) \mid T=1 \} ])\mid T=1\} (1 + o_p(1))\\
			& = &n_1^{1-a}   K_{a} (1 + o_p(1))\\
			&= &O_p(n^{1-a}).
		\end{eqnarray*}
		
		Moreover,  by the maximal inequality  \citep[e.g.,][]{rigollet2015high}, we  obtain that
		$$Pr\left(\max_{i:T_i=1}\frac{1}{n_1}\exp\left(\hat{\theta}^{\top} W^{1 / 2}(\Phi\left(X_{i}\right)-E\{\Phi\left(X\right)\mid T=1\})\right)\geq \frac{1}{n_1}\exp(t)\right) \leq n_1  \exp\left(-\frac{t^2}{2\sigma^2}\right).$$
		Set $t=\sqrt{2}\sigma\log(n_1^{1/2}s)$, we have  $\max_{i:T_i=1}\hat{w_i}^{MB} = O_p(n^{-1/2})$.
	\end{proof}

	The following lemma shows the asymptotic property for the solution $\hat{\theta}$ of the MB dual problem.
	\begin{lemma}
		Assume  that  $1/\pi(X)= C*\tilde{f}^{*'}(\theta^{*\top} {W}^{1/2}(\Phi(X)-\bar{\Phi}))$ for some $\theta^{*} \in \operatorname{int}(\Theta)$ and some constant $C$ for all $X$. Suppose that Assumption 3 holds. Then ${\hat{\theta}}$ is consistent for $\theta^*$. Moreover,   ${\hat{\theta}}$  is asymptotically normal.
	\end{lemma}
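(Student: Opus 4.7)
The plan is to treat $\hat{\theta}$ as the M-estimator defined by the MB dual objective and to apply standard convex M-estimation theory. Write the dual objective in the generic form
\[
L_n(\theta) = \frac{1}{n_1}\sum_{i:T_i=1}\tilde{f}^{*}\!\left(\theta^{\top}W^{1/2}(\Phi(X_i)-\bar{\Phi})\right) - \theta^{\top}W^{1/2}\bar{\Phi}_0,
\]
where the linear term comes from the balancing constraint (and $\bar{\Phi}_0$ is the corresponding target moment on the full sample). The stated assumption $1/\pi(X)=C\,\tilde{f}^{*\prime}(\theta^{*\top}W^{1/2}(\Phi(X)-\bar{\Phi}))$ identifies $\theta^{*}$ through the population score equation $E\{\tilde{f}^{*\prime}(\theta^{*\top}W^{1/2}(\Phi(X)-\bar{\Phi}))W^{1/2}(\Phi(X)-\bar{\Phi})\mid T=1\}=0$, which is exactly the gradient of the population analogue $L(\theta)=E\{L_n(\theta)\}$ evaluated at $\theta^{*}$.

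For consistency, I would first show uniform convergence $\sup_{\theta\in\Theta}|L_n(\theta)-L(\theta)|\stackrel{p}{\rightarrow}0$. Since $\tilde{f}^{*}$ is convex and, under Assumption 3, the parameter space is compact with adequate moment bounds on $\Phi(X)$ (the exponential moment condition from the previous lemma can be leveraged here), this reduces to pointwise convergence by a standard convexity argument. Uniqueness of the population minimizer follows from strict convexity of $L(\theta)$ together with the non-degeneracy of the covariance of $W^{1/2}(\Phi(X)-\bar{\Phi})$. Consistency $\hat{\theta}\stackrel{p}{\rightarrow}\theta^{*}$ then follows by the argmin theorem for convex stochastic processes.

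For asymptotic normality, I would start from the first-order condition $\nabla L_n(\hat{\theta})=0$, Taylor expand around $\theta^{*}$, and obtain
\[
\sqrt{n}(\hat{\theta}-\theta^{*}) = -[\nabla^{2}L_n(\tilde{\theta})]^{-1}\sqrt{n}\,\nabla L_n(\theta^{*}),
\]
for some $\tilde{\theta}$ between $\hat{\theta}$ and $\theta^{*}$. Applying the multivariate CLT to $\sqrt{n}\,\nabla L_n(\theta^{*})$ (which has mean zero by the identification assumption and finite variance under Assumption 3) and the uniform LLN to the Hessian (so that $\nabla^{2}L_n(\tilde{\theta})\stackrel{p}{\rightarrow}\nabla^{2}L(\theta^{*})$, which is positive definite by strict convexity), I would conclude $\sqrt{n}(\hat{\theta}-\theta^{*})\stackrel{d}{\rightarrow}N(0,V)$ with sandwich form $V=[\nabla^{2}L(\theta^{*})]^{-1}\Sigma[\nabla^{2}L(\theta^{*})]^{-1}$, where $\Sigma$ is the asymptotic variance of $\sqrt{n}\,\nabla L_n(\theta^{*})$.

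The main obstacle will be controlling the uniform behaviour of $L_n$ and its derivatives in a neighbourhood of $\theta^{*}$: because $\tilde{f}^{*}$ is typically exponential (for the entropy loss) or otherwise super-linear, one needs to combine compactness of $\Theta$ with the exponential-moment assumption on $\theta^{\top}W^{1/2}(\Phi(X)-\bar{\Phi})$ used in Lemma 2 to obtain envelopes for the derivative and Hessian classes. Once these envelopes are in place, uniform convergence and the invertibility of the limiting Hessian are essentially routine, and the two conclusions follow together from standard convex M-estimator arguments.
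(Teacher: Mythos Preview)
Your high-level M-estimation strategy is close to the paper's, but there is a genuine gap in how you set up the dual objective. The MB dual problem derived in Theorem~1 is
\[
\underset{\theta}{\text{minimize}}\;
\sum_{i:T_i=1}\tilde{f}^{*}\!\left(\theta^{\top}W^{1/2}(\Phi(X_i)-\bar{\Phi})\right)\;+\;\sqrt{\delta}\,\|\theta\|_2,
\]
so the extra piece is a non-smooth $\ell_2$ penalty coming from the norm-ball balancing constraint, not a linear term $-\theta^{\top}W^{1/2}\bar{\Phi}_0$ as you wrote. In particular, the first-order condition for $\hat{\theta}$ is
\[
\sum_{i:T_i=1}\tilde{f}^{*\prime}\!\left(\hat{\theta}^{\top}W^{1/2}(\Phi(X_i)-\bar{\Phi})\right)(\Phi_j(X_i)-\bar{\Phi}_j)
\;+\;\sqrt{\delta}\,\frac{\hat{\theta}_j}{\|\hat{\theta}\|_2}=0,
\]
and the penalty contribution does not vanish automatically. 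The paper's proof handles this explicitly: it defines the \emph{unpenalized} score $\psi(X_i,T_i;\theta)$, shows that $\theta^{*}$ is the unique zero of $E\{\psi\}$ via the identification assumption, obtains a consistent unpenalized Z-estimator $\tilde{\theta}$, and then uses the rate condition $\delta=o(n)$ from Assumption~3 to conclude $\tfrac{1}{n}\sqrt{\delta}\,\hat{\theta}_j/\|\hat{\theta}\|_2=o_p(n^{-1/2})$, so that $\hat{\theta}$ and $\tilde{\theta}$ coincide asymptotically. Without this step, your argmin/Taylor-expansion argument would be applied to the wrong criterion, and you never invoke the rate on $\delta$, which is precisely what makes the penalty negligible.

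Once the penalty is dealt with in this way, your remaining steps---identification of $\theta^{*}$ through the population score, a CLT for the score at $\theta^{*}$, and convergence of the Hessian to obtain the sandwich variance $V=\{E(\partial\psi/\partial\theta^{\top})\}^{-1}E(\psi\psi^{\top})\{E(\partial\psi/\partial\theta)\}^{-1}$---match the paper's estimating-equation argument essentially line for line.
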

	
	\begin{proof}
		The first order optimality condition for the dual  problem is
		$$
		\begin{aligned}
			\sum_{i=1}^{n}T_i\tilde{f}^{*'}\left(\hat{\theta}^{\top} W^{1 / 2}\left(\Phi\left(X_{i}\right)-\bar{\Phi}\right)\right)(\Phi_j\left(X_{i}\right)
			-\bar{\Phi}_j) + \sqrt{\delta}\frac{\hat{\theta}_j}{||\hat{\theta}||_2}=0, \text{ \quad $j=1,\cdots,K$, }
		\end{aligned}
		$$
		where $\Phi_j\left(X_{i}\right)$, $\bar{\Phi}_j$, and $\hat{\theta}_j$ are the $j$th components of $\Phi\left(X_{i}\right)$, $\bar{\Phi}$, and $\hat{\theta}$, respectively.
		Write
		$$
		\begin{aligned}
			\frac{1}{n}\sum_{i=1}^{n} \psi(X_i,T_i;{\theta})=		\frac{1}{n}\sum_{i=1}^{n}T_i\tilde{f}^{*'}\left({\theta}^{\top} W^{1 / 2}\left(\Phi\left(X_{i}\right)-E\{\Phi\left(X\right)\}\right)\right)
			(\Phi\left(X_{i}\right)
			-E\{\Phi\left(X\right)\}),
		\end{aligned}
		$$
		which is a set of $K$ estimating functions.
		Note that
		$$	
		\begin{aligned}
			E\left\{ \psi(X_i,T_i;{\theta})\right\}&=E\left\{E\left( \psi(X,T;{\theta})\right)\mid X\right\}\\
			&= E\left[\pi(X)\tilde{f}^{*'}\left({\theta}^{\top} W^{1 / 2}\left(\Phi\left(X\right)-E\{\Phi\left(X\right)\}\right)\right)W^{1 / 2}(\Phi\left(X\right)-E\{\Phi\left(X\right)\})\right].
		\end{aligned}
		$$
		For the above conditional expectation to be zero, it must be true that $$\pi(X)\tilde{f}^{*'}\left({\theta}^{\top} W^{1 / 2}\left(\Phi\left(X\right)-E\{\Phi\left(X\right)\}\right)\right)$$ is a constant for any $X$. By the assumption that
		$1/\pi(X)= C*\tilde{f}^{*'}(\theta^{*\top} {W}^{1/2}(\Phi(X)-\bar{\Phi}))$, we obtain that $\theta^{*}$ is the unique solution of $E\left\{ \psi(X_i,T_i;{\theta})\right\}=0$. Therefore, by the estimating equation theory \citep{van2000asymptotic}, the solution of the estimating equations
		$$	\frac{1}{n}\sum_{i=1}^{n} \psi(X_i,T_i;{\theta})=0,$$
		denoted by $\tilde{\theta}$, is asymptotically consistent for $\theta^*$. Moreover, by the assumption that  $\delta=o(n)$, we obtain $\frac{1}{n}\sqrt{\delta}\frac{{\theta}_j}{||{\theta}||_2}=o_p(n^{-1/2})$ for any $\theta \in \operatorname{int}(\Theta)$. Therefore, the difference between $\hat{\theta}$ and $\tilde{\theta}$ is asymptotically negligible, and  thus $\hat{\theta}$ is asymptotically consistent for $\theta^*$. Furthermore, by Taylor expansion, we obtain that as $n\rightarrow \infty$,
		$$\sqrt{n}(\hat{\theta}-\theta^*)\stackrel{d}{\longrightarrow} N(0, \Sigma),$$
		where $\Sigma=\left\{E(\frac{\partial{\psi}}{\partial \theta^\top})\right\}^{-1}E(\psi \psi^\top)\left\{E(\frac{\partial{\psi}}{\partial \theta})\right\}^{-1}$.
	\end{proof}
	
	The following lemma asserts that the MB weight is close to the unknown inverse probability weight. The proof is similar to  arguments by \cite{lee2023improving}.
	\begin{lemma}(Weight Behaviour III)
		Assume  that  $1/\pi(X)= C*\tilde{f}^{*'}(\theta^{*\top} {W}^{1/2}(\Phi(X)-\bar{\Phi}))$ for some $\theta^{*} \in \operatorname{int}(\Theta)$ and some constant $C$ for all $X$. Suppose that Assumption 3 holds. Then the following property holds:
		\begin{equation*}
			n \hat{w_i}^{MB}=\frac{1}{\pi(X_i)}+O_p(n^{-\frac{1}{2}}).
		\end{equation*}
	\end{lemma}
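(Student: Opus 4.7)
The plan is to combine the $\sqrt{n}$-consistency $\hat{\theta}-\theta^* = O_p(n^{-1/2})$ provided by Lemma~3 with the identification $1/\pi(X) = C\,\tilde{f}^{*'}(\theta^{*\top} W^{1/2}(\Phi(X)-\bar{\Phi}))$ from the hypothesis, and to quantify the gap between $n\hat{w}_i^{MB}$ and $1/\pi(X_i)$ through a first-order Taylor expansion of $\tilde{f}^{*'}$ around $\theta^*$.

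First, I would write the MB weight in explicit form. From the first-order optimality condition for the dual problem already exhibited in the proof of Lemma~3, and up to the penalty term whose contribution is of order $o_p(n^{-1/2})$ under Assumption~3, the weights have the form
$$\hat{w}_i^{MB} \;=\; \frac{\tilde{f}^{*'}\!\bigl(\hat{\theta}^{\top} W^{1/2}(\Phi(X_i)-\bar{\Phi})\bigr)}{\sum_{j:T_j=1}\tilde{f}^{*'}\!\bigl(\hat{\theta}^{\top} W^{1/2}(\Phi(X_j)-\bar{\Phi})\bigr)}.$$

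Next, I would expand $\tilde{f}^{*'}$ around $\theta^*$. Since $\hat{\theta}-\theta^*=O_p(n^{-1/2})$ and $W^{1/2}(\Phi(X_i)-\bar{\Phi})=O_p(1)$, smoothness of $\tilde{f}^{*'}$ gives for the numerator
$$\tilde{f}^{*'}\!\bigl(\hat{\theta}^{\top} W^{1/2}(\Phi(X_i)-\bar{\Phi})\bigr) \;=\; \tilde{f}^{*'}\!\bigl(\theta^{*\top} W^{1/2}(\Phi(X_i)-\bar{\Phi})\bigr) + O_p(n^{-1/2}) \;=\; \frac{1}{C\,\pi(X_i)} + O_p(n^{-1/2}),$$
the last equality invoking the hypothesis on $\pi$. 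The same expansion applied termwise to the denominator, combined with a weak law of large numbers, yields
$$\frac{1}{n}\sum_{j:T_j=1}\tilde{f}^{*'}\!\bigl(\hat{\theta}^{\top} W^{1/2}(\Phi(X_j)-\bar{\Phi})\bigr) \;=\; \frac{1}{n}\sum_{j:T_j=1}\frac{1}{C\,\pi(X_j)} + O_p(n^{-1/2}) \;=\; \frac{1}{C}\,E\!\left\{\frac{T}{\pi(X)}\right\} + O_p(n^{-1/2}) \;=\; \frac{1}{C} + O_p(n^{-1/2}),$$
where the last step uses $E\{T/\pi(X)\}=1$. Taking the ratio of the numerator and $n$ times the denominator and cancelling $C$ delivers $n\hat{w}_i^{MB} = 1/\pi(X_i) + O_p(n^{-1/2})$.

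The main obstacle lies in the Taylor step: one must verify that the second-order remainder, which involves $\|\hat{\theta}-\theta^*\|^2$ times $\tilde{f}^{*''}$ evaluated at some intermediate point, is genuinely $o_p(n^{-1/2})$ and does not blow up with $i$. The sub-Gaussian tail condition imported from Lemma~2, together with continuity of $\tilde{f}^{*''}$ on a neighbourhood of $\theta^*$, should suffice to bound this remainder uniformly in $i$. A secondary check is that $\bar{\Phi}$ is the sample mean rather than $E\{\Phi(X)\}$, but the discrepancy is $O_p(n^{-1/2})$ by the central limit theorem and enters both numerator and denominator symmetrically, so it is absorbed into the same remainder. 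This strategy parallels the argument in \cite{dong2020integrative}.
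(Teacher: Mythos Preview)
Your proposal is correct and follows essentially the same route as the paper's proof: both write $n\hat w_i^{MB}$ as the ratio of $\tilde f^{*'}(\hat\theta^\top W^{1/2}(\Phi(X_i)-\bar\Phi))$ to its sample average over $\{T_j=1\}$, replace $\hat\theta$ by $\theta^*$ and $\bar\Phi$ by $E\{\Phi(X)\}$ at cost $O_p(n^{-1/2})$, invoke the hypothesis to identify the numerator with $1/(C\pi(X_i))$, and use $E\{T/\pi(X)\}=1$ plus the law of large numbers to reduce the denominator to $1/C$. Your discussion of the Taylor remainder and the $\bar\Phi$ versus $E\{\Phi(X)\}$ discrepancy is in fact more explicit than the paper, which simply absorbs these into $O_p(n^{-1/2})$ terms without comment.
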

	
	\begin{proof}
		$$
		\begin{aligned}
			\frac{1}{n}\sum_{i=1}^{n} T_i\tilde{f}^{*'}\left(\hat{\theta}^{\top} W^{1 / 2}\left(\Phi\left(X_{i}\right)-\bar{\Phi}\right)\right)
			&=\frac{1}{n}\sum_{i=1}^{n} T_i\tilde{f}^{*'}\left(\hat{\theta}^{\top} W^{1 / 2}\left(\Phi\left(X_{i}\right)-E\{\Phi\left(X\right)\}\right)\right) + O_p(n^{-\frac{1}{2}})\\
			&=\frac{1}{n}\sum_{i=1}^{n} T_i\tilde{f}^{*'}\left({\theta}^{*\top} W^{1 / 2}\left(\Phi\left(X_{i}\right)-E\{\Phi\left(X\right)\}\right)\right)  + O_p(n^{-\frac{1}{2}}) \\
			&=\frac{1}{n}\sum_{i=1}^{n}\frac{T_i}{C\pi(X_i)}+O_p(n^{-\frac{1}{2}})\\
			&= \frac{1}{C}+o_p(1).
		\end{aligned}
		$$
		Therefore,
		$$
		\begin{aligned}
			n\hat{w_i}^{MB}&=\frac{\tilde{f}^{*'}\left(\hat{\theta}^{\top} W^{1 / 2}\left(\Phi\left(X_{i}\right)-\bar{\Phi}\right)\right)}{\frac{1}{n}\sum_{i: T_i=1}\tilde{f}^{*'}\left(\hat{\theta}^{\top} W^{1 / 2}\left(\Phi\left(X_{i}\right)-\bar{\Phi}\right)\right)}\\
			&=\frac{\tilde{f}^{*'}\left({\theta}^{*\top} W^{1 / 2}\left(\Phi\left(X_{i}\right)-E\{\Phi\left(X\right)\}\right)\right)+O_p(n^{-1/2})}
			{1/C+o_p(1)}\\
			&=\frac{1}{\pi(X_i)}+O_p(n^{-\frac{1}{2}}).
		\end{aligned}
		$$
	\end{proof}
	
	\subsection{Proof for Theorem 1}
	
	\begin{proof}
		We  utilize the Fenchel duality theorem  \citep[Theorem B.39]{mohri2018foundations}. Without loss of generability, suppose that subjects $i=1,\cdots, n_1$ are in the treated group, and subjects $i=n_1+1,\cdots, n$ are in the control group. The primal problem (7) is
		\begin{equation*}
			\underset{w \in \mathbb{R}^{n_1}}{\text{minimize}}: \ \sum_{i=1}^{n_1}\tilde{f}(w_i)+I_{\mathcal{C}}\left(\sum_{i=1}^{n_1}w_i{W}^{1/2}(\Phi(X_i)-\bar{\Phi})\right),
			\label{PrimalS}
		\end{equation*}
		where $w=(w_1,\cdots, w_{n_1})^\top$.  Let $F(w)=\sum_{i=1}^{n_1}\tilde{f}(w_i)$. The conjugate function of $F$ is given by
		$$
		\begin{aligned}
			F^*(w)&=\sup_{v}\left(\sum_{i=1}^{n_1} w_i v_i-F(v)\right)\\
			&=\sup_{v}\sum_{i=1}^{n_1} \left(w_i v_i-\tilde{f}(v_i)\right)\\
			&=\sum_{i=1}^{n_1}\sup_{v_i} (w_i v_i-\tilde{f}(v_i))\\
			&=\sum_{i=1}^{n_1}\tilde{f}^*(w_i).
		\end{aligned}
		$$
		Note that if the normalization constraint is added to the primal problem, then the conjugate function $F^*$ is no longer additive.
		Let $g(\theta)=I_{\mathcal{C}}(\theta)$, for any $\theta\in \mathbb{R}^K$.   The conjugate function of $g$ is given by
		$$
		\begin{aligned}
			g^*(\theta)&=\sup_{u}\left(\sum_{k=1}^K \theta_k u_k-I_{\mathcal{C}}(u)\right)\\
			&=\sup_{\|u\|_2\leq \sqrt{\delta}}\left(\sum_{k=1}^K \theta_k u_k\right)\\
			&=\sup_{\|u\|_2\leq \sqrt{\delta}}\left(\|\theta\|_2\|u\|_2\right)\\
			&=\sqrt{\delta}\|\theta\|_2.
		\end{aligned}
		$$
		Define the mapping $A:\mathbb{R}^{n_1}\rightarrow\mathbb{R}^K$ such that $Aw=\sum_{i=1}^{n_1}w_i{W}^{1/2}(\Phi(X_i)-\bar{\Phi})$. Then $A$ is a bounded linear map. Let $A^*$ be the adjoint operator of $A$. Then for all $\theta=(\theta_1,\cdots, \theta_K)^\top\in \mathbb{R}^{K}$,
		$$
		\begin{aligned}
			A^*\theta&=(\sum_{k=1}^K \theta_k{W}^{1/2}(\Phi_k(X_1)-\bar{\Phi}_k),\cdots, \sum_{k=1}^K \theta_k{W}^{1/2}(\Phi_k(X_{n_1})-\bar{\Phi}_k))^\top\\
			&=(\theta^\top {W}^{1/2}(\Phi(X_1)-\bar{\Phi}),\cdots, \theta^\top {W}^{1/2}(\Phi(X_{n_1})-\bar{\Phi}))^\top.
		\end{aligned}
		$$
		
		Define $\theta_0=Aw_0=\sum_{i=1}^{n_1} {W}^{1/2}(\Phi(X_i)-\bar{\Phi})/n_1^a$, where $w_0=(1/n_1^a,\cdots, 1/n_1^a)^\top$. Here, we choose $a$ to be sufficiently large such that $\|\theta_0\|_2< \sqrt{\delta}$. Since each component of $w_0$ is non-negative, $w_0\in dom({F})$, where $dom(F)$ denotes the domain of $F$. Therefore,  $\theta_0\in A(dom(F))$.  Since $\|\theta_0\|_2< \sqrt{\delta}$, we obtain that $g(\theta_0)=0$   and $g$ is continuous at $\theta_0$. Therefore, $\theta_0\in A(dom(F))\cap cont(g)$, implying that $ A(dom(F))\cap cont(g)\neq \emptyset$, where $cont(g)$ is the set of continuous points of $g$. Therefore, the strong duality condition of the  Fenchel duality theorem is verified.
		Moreover,
		$$
		\begin{aligned}
			F(A^*\theta)+g^*(-\theta)&=\sum_{i=1}^{n_1}\tilde{f}^*(\theta^\top {W}^{1/2}(\Phi(X_i)-\bar{\Phi}))+\sqrt{\delta}\|\theta\|_2.
		\end{aligned}
		$$
		The  Fenchel duality theorem leads to that
		$$
		\begin{aligned}
			&\underset{w \in \mathbb{R}^{n_1}}{\text{minimize}}: \ \sum_{i=1}^{n_1}\tilde{f}(w_i)+I_{\mathcal{C}}\left(\sum_{i=1}^{n_1}w_i{W}^{1/2}(\Phi(X_i)-\bar{\Phi})\right)\\
			=&\underset{\theta \in \mathbb{R}^K}{\text{minimize}}: \ \sum_{i=1}^{n_1}\tilde{f}^*\left(\theta^\top {W}^{1/2}(\Phi(X_i)-\bar{\Phi})\right)+\sqrt{\delta}\lVert \theta\rVert_2.
		\end{aligned}
		$$
		Furthermore, since the strong duality condition holds,  equality of the Fenchel's inequality \citep[Prop. 38]{mohri2018foundations} holds. leading to that
		$A^*\hat{\theta}$ is a subgradient of $F$ at $\hat{w}$. We consider the simpler situation that the loss function $f(\cdot)$ is differentiable and it has non-negative domain as in \citep{Josey2021}. Then, $A^*\hat{\theta}=F^\prime(\hat{w})$, or equivalently,
		$$\hat{\theta}^\top {W}^{1/2}(\Phi(X_i)-\bar{\Phi})
		=\tilde{f}^\prime(\hat{w}_i).
		$$
	 Therefore,  $\hat{w}_i = (\tilde{f}^{\prime})^{-1}(\hat{\theta}^{\top}W^{1 / 2}\left(\Phi\left(X_{i}\right)-\bar{\Phi}\right))=(\tilde{f}^{*})^{\prime}(\hat{\theta}^{\top}W^{1 / 2}\left(\Phi\left(X_{i}\right)-\bar{\Phi}\right))$, $i=1,\cdots, n_1$,  where the second equality holds by Fenchel's identity \citep{RyuYin2022,MordukhovichNam2022}.

	\end{proof}
	
	\subsection{Proof for Theorem 2}
	
	\begin{proof}
		We only need to show that $	\widehat{\tau}_1^{MB} =\tau_1+ O_p(n^{-1/2})$, where $\widehat{\tau}_1^{MB}  = \sum_{i: T_i=1} \hat{w_i}^{MB}Y_i$.
		First,  assume the propensity score satisfies that $1/\pi(X)= C*\tilde{f}^{*'}(\theta^{*\top} {W}^{1/2}(\Phi(X)-\bar{\Phi}))$ for some $\theta^{*} \in \operatorname{int}(\Theta)$ and some constant $C$ . By Lemma 3,
		$$
		\begin{array}{ll}
			\sum_{i: T_i=1} \hat{w_i}^{MB}Y_i &= \sum_{i: T_i=1}\left(\hat{w_i}^{MB} - \frac{1}{n\pi(X_i)}\right)Y_i + \sum_{i: T_i=1}\frac{Y_i}{n\pi(X_i)} \\
			& =  \sum_{i: T_i=1} O_p(n^{-3/2})O_p(1) + O_p(n^{-1/2})  + \tau_1.
		\end{array}
		$$
		Therefore,  $\widehat{\tau}_1-\tau_1 = O_p(n^{-1/2})$.
		
		Second,
		we assume that the conditional potential outcome $E\{Y(1)|X\}=\beta^{\top} W^{1 / 2} \Phi(X)$ with parameter $\beta$,    $\sum_{i: T_i=1} (\hat{w_i}^{MB})^{3} / \|\hat{w}^{MB}\|_2^3 = o_p(1)$, and $\|\hat{w}^{MB}\|_2^2 = O_p(n^{-1})$.
		Note that the following decomposition holds:
		$$
		\begin{array}{ll}
			\sum_{i:T_i=1} \hat{w_i}^{MB}Y_i -\tau_1
			=& \beta^{\top} \sum_{i:T_i=1}\hat{w_i}^{MB} W^{1 / 2}(\Phi(X_i) - \bar{\Phi}) + \sum_{i:T_i=1}\hat{w_i}^{MB}\epsilon_{1i} \\
			& + \frac{1}{n}\sum_{i=1}^{n}\beta^{\top}W^{1 / 2}\Phi(X_i) -\tau_1.
		\end{array}
		$$
		Let the $\hat{w_i}$  be the unnormalized MB weights. By Assumption 3.4,
		$$
		\begin{array}{ll}
			\sum_{i:T_i=1}\hat{w_i} = \sum_{i:T_i=1}\exp\left( \theta^{\top}W^{1/2}\left(\Phi(X_i) - \bar{\Phi}\right) \right)=O_p(n).
		\end{array}
		$$
		Therefore,
		$$
		\begin{array}{ll}
			|\beta^{\top} \sum_{i:T_i=1}\hat{w_i}^{MB} W^{1 / 2}(\Phi(X_i) - \bar{\Phi})|
			&\leq \|\beta\|_2 \|\sum_{i:T_i=1}\hat{w_i}^{MB} W^{1 / 2}(\Phi(X_i) - \bar{\Phi})\|_2 \\
			&=  \|\beta\|_2 \|\sum_{i:T_i=1}\hat{w_i}W^{1 / 2}(\Phi(X_i) - \bar{\Phi})\|_2 / \sum_{i:T_i=1}\hat{w_i}\\
			&\leq \|\beta\|_2 \sqrt{\delta} / \sum_{i:T_i=1}\hat{w_i}\\
			&= o_p(n^{-1/2}).
		\end{array}
		$$
		By Lemma 1, we have $\frac{1}{\|\hat{w}^{MB}\|_2}\sum_{i:T_i=1}\hat{w_i}^{MB}\epsilon_{1i}=O_p(1)$. Therefore, by the assumption that   $\|\hat{w}^{MB}\|_2^2 = O_p(n^{-1})$, we obtain
		$
		\sum_{i:T_i=1}\hat{w_i}^{MB}\epsilon_{1i} = O_p(\|\hat{w}^{MB}\|_2) = O_p(n^{-1/2})
		$.
		Finally,\\
		$
		\frac{1}{n}\sum_{i=1}^{n}\beta^{\top}W^{1 / 2}\Phi(X_i) -\tau_1 = O_p(n^{-1/2}).
		$
		Therefore,  $\widehat{\tau}_1-\tau_1 = O_p(n^{-1/2})$. The double robustness property is proved.
	\end{proof}

	\subsection{Proof for Theorem 3}
	\begin{proof}

    By the assumption for $\theta^{\top}W^{1/2}\left(E\{\Phi(X_i)\} - E\{\Phi(X_i)\mid T=1\}\right)$, we obtain \\ $\exp\left(\theta^{\top}W^{1/2}\left(E\{\Phi(X_i)\} - E\{\Phi(X_i)\mid T=1\}\right)\right) = O_p(n^{\gamma})$. It follows that
	$$
	\begin{array}{ll}
		\|\sum_{i:T_i=1}\hat{w}_i^{MB} W^{1 / 2}(\Phi(X_i) - \bar{\Phi})\|_2
		&\leq \sqrt{\delta} O_p(n^{-1})\\
		&\ \ \times \exp(\theta^{\top}W^{1/2}\left(E\{\Phi(X_i)\} - E\{\Phi(X_i)\mid T=1\} )+o_p(1)\right)\\
		&=O_p(n^{s-1 + \gamma}).
	\end{array}
	$$
	Therefore,
	$$
	\begin{array}{cc}
		& |\beta^{\top} \sum_{i:T_i=1}\hat{w_i}^{MB} W^{1 / 2}(\Phi(X_i) - \bar{\Phi})|\leq  \|\beta\|_2 \|\sum_{i:T_i=1}\hat{w_i}^{MB} W^{1 / 2}(\Phi(X_i) - \bar{\Phi})\|_2 \leq  O_p(n^{s+ \gamma + \alpha}).
	\end{array}
	$$
	
	By Lemmas 1 and 2,
	$
	\sum_{i:T_i=1}\hat{w}_i^{MB}\epsilon_{1i} = O_p(\|\hat{w}^{MB}\|_2) = O_p(n^{-1/2})
	$.
	Here, the condition $\sum_{\left\{i: T_{i}=1\right\}} w_{i}^{3} / \|w\|_{2}^{3} = o_p(1)$ holds. Moreover,
	$$
	\frac{1}{n}\sum_{i=1}^{n}\beta^{\top}W^{1/2}\Phi(X_i) - \tau_1 = \frac{1}{n}\sum_{i=1}^{n}\mu_1(X_i) -\tau_1  = O_p(n^{-1/2}).
	$$
	Therefore,
	$$
	\begin{array}{ll}
		\widehat{\tau_1}^{MB}-\tau_1  &=  \beta^{\top} \sum_{i:T_i=1}\hat{w}_i^{MB} W^{1 / 2}(\Phi(X_i) - \bar{\Phi}) + \sum_{i: T_i=1}\hat{w}_i^{MB}\epsilon_{1i} \\
		&\ \ + \frac{1}{n}\sum_{i=1}^{n}\beta^{\top}W^{1 / 2}\Phi(X_i) - \tau_1\\
		&= O_p(n^{-1/2}) + O_p(n^{s+ \gamma + \alpha}).
	\end{array}
	$$
	
	\end{proof}

 	\subsection{Proof of  Theorem 4}
	\begin{proof}
		To simplify the notations, let  $\pi_i=\pi(X_i)$. Write $E\{Y(1)|X\}=\mu_1(X)=\beta^{\top} W^{1 / 2} \Phi(X)$ with parameter $\beta$. Then the following decomposition holds:
		$$
		\begin{aligned}
			\hat{\tau_1}-\tau_1&=\sum_{i=1}^{n} T_{i} \hat{w}^{MB}_{i} Y_{i}-\tau_1\\
			&=\sum_{i=1}^{n}T_{i}\left(\hat{w}^{MB}_{i} -\frac{1}{n\pi_i}\right)\left(Y_i-\mu_1\left(X_{i}\right)\right) +\sum_{i=1}^{n}\frac{T_{i}}{n\pi_i}\left(Y_i-\mu_1\left(X_{i}\right)\right)\\
			&+\sum_{i=1}^{n}\left(T_{i}\hat{w}^{MB}_{i} -\frac{1}{n}\right)\beta^{\top} W^{1 / 2} \Phi\left(X_{i}\right)+\left(\frac{1}{n}\sum_{i=1}^{n}\mu_1(X_i)-\tau_1\right)\\
			&=A_{1}+R_1+A_{2}+R_2.
		\end{aligned}
		$$

        Firstly, following the proof of theorem 2 in \cite{wang2020minimal},  we show that
        $$
        \sup_{x \in \mathcal{X}}|\hat{w}^{MB}_{i} -\frac{1}{n\pi_i}| = o_p(1).
        $$
        \begin{lemma}

        There exists a global minimizer $\hat{\theta}$ such that
        $$
        \left\|\hat{\theta}-\theta_{true}\right\|_2=O_p\left(n^{-\frac{1}{4} + \epsilon }\right) \text{ for arbitrary $\epsilon > 0$}
        $$
        for the objective function $G(\cdot)$:
        $$
        G(\theta):= \sum_{i=1}^{n}T_i\tilde{f}^{*}\left({\theta}^{\top} W^{1 / 2}\left(\Phi\left(X_{i}\right)-\bar{\Phi}\right)\right)+\delta^{\frac{1}{2}}||\theta||_2 ,
        $$
        \end{lemma}

        To show that a minimizer $\Delta^*$ of $G\left(\theta_{true}+\Delta\right)$ exists in
        $$
        \mathcal{C}=\left\{\Delta \in \mathcal{R}^K:\|\Delta\|_2 \leq C n^{s - 1 + \epsilon}\right\}
        $$
        for some constant $C$, it suffices to show that
        \begin{equation}
        E\left\{\inf _{\Delta \in \mathcal{C}} G\left(\theta_{true}+\Delta\right)-G\left(\theta_{true}\right)>0\right\} \rightarrow 1, \text { as } n \rightarrow \infty
        \label{eq:verify}
        \end{equation}
        by the continuity of $G(\cdot)$.

        To show \eqref{eq:verify}, we use mean value theorem: for some $\tilde{\theta}$ between $\theta_{true}$ and $\hat{\theta}$,
        $$
        \begin{aligned}
        & G\left(\theta_{true}+\Delta\right)-G\left(\theta_{true}\right) \\
        = & \sum_{i=1}^{n}T_i\left(\tilde{f}^{*}\left((\theta_{true}+\Delta)^{\top} W^{1 / 2}\left(\Phi\left(X_{i}\right)-\bar{\Phi}\right)\right) - \tilde{f}^{*}\left({\theta_{true}}^{\top} W^{1 / 2}\left(\Phi\left(X_{i}\right)-\bar{\Phi}\right)\right)\right) \\
        + & \delta^{\frac{1}{2}}||\theta_{true}+\Delta||_2 - \delta^{\frac{1}{2}}||\theta_{true}||_2 \quad (||\theta_{true}+\Delta||_2 - ||\theta_{true}||_2 \geq - ||\Delta||_2.) \\
        \geq & \Delta \cdot \sum_{i=1}^{n}T_i\tilde{f}^{*'}\left((\theta_{true}+\Delta)^{\top} W^{1 / 2}\left(\Phi\left(X_{i}\right)-\bar{\Phi}\right)\right)\left(\Phi\left(X_{i}\right)-\bar{\Phi}\right) \\
        + & \frac{1}{2}\sum_{i=1}^{n}T_i\tilde{f}^{*''}\left({\theta}_{true}^{\top} W^{1 / 2}\left(\Phi\left(X_{i}\right)-\bar{\Phi}\right)\right)(\left(\Phi\left(X_{i}\right)-\bar{\Phi}\right)^{\top}\Delta)^2  \\
        + & \frac{1}{6} \sum_{i=1}^{n}T_i\tilde{f}^{*'''}\left(\tilde{\theta}_{true}^{\top} W^{1 / 2}\left(\Phi\left(X_{i}\right)-\bar{\Phi}\right)\right) \left(\left(\Phi\left(X_{i}\right)-\bar{\Phi}\right)^{\top}\Delta\right)^3\\
        - & \delta^{\frac{1}{2}}||\Delta||_2 \quad ( \text{We treat $\left(\Phi\left(X_{i}\right)-\bar{\Phi}\right)^{\top}\Delta$ as a univariate variable and apply mean value theorem.})\\
        \geq &\frac{1}{2}\sum_{i=1}^{n}T_i\tilde{f}^{*''}\left({\theta}_{true}^{\top} W^{1 / 2}\left(\Phi\left(X_{i}\right)-\bar{\Phi}\right)\right)(\left(\Phi\left(X_{i}\right)-\bar{\Phi}\right)^{\top}\Delta)^2  \\
        = & \frac{1}{2}\sum_{i=1}^{n}T_i\tilde{f}^{*''}\left({\theta}_{true}^{\top} W^{1 / 2}\left(\Phi\left(X_{i}\right)-\bar{\Phi}\right)\right)(\left(\Phi\left(X_{i}\right)-\bar{\Phi}\right)^{\top}\Delta)^2  \\
        + & \frac{1}{6} \sum_{i=1}^{n}T_i\tilde{f}^{*''}\left(\tilde{\theta}_{true}^{\top} W^{1 / 2}\left(\Phi\left(X_{i}\right)-\bar{\Phi}\right)\right) \left(\left(\Phi\left(X_{i}\right)-\bar{\Phi}\right)^{\top}\Delta \right)^3 \\
        -&  \left(\delta^{\frac{1}{2}} + ||\sum_{i=1}^{n}T_i\tilde{f}^{*'}\left(\theta_{true}^{\top} W^{1 / 2}\left(\Phi\left(X_{i}\right)-\bar{\Phi}\right)\right)\left(\Phi\left(X_{i}\right)-\bar{\Phi}\right)||_2\right)||\Delta||_2. \quad (\text{Cauchy inequality})
        \end{aligned}
        $$
        Recall that the Bernstein's inequality for random matrices in \cite{tropp2015introduction} says the following. Let $\left\{M_k\right\}$ be a sequence of independent random matrices with dimensions $d_1 \times d_2$. Assume that $E (M_k)=0$ and $\left\|M_k\right\|_2 \leq R_n$ almost surely. Define
        $$
        \sigma_n^2=\max \left\{\left\|\sum_{k=1}^n E\left(M_k M_k^{\top}\right)\right\|_2,\left\|\sum_{k=1}^n E\left(M_k^{\top} M_k\right)\right\|_2\right\}
        $$
        Then for all $t \geq 0$,
        $$
        \operatorname{pr}\left(\left\|\sum_{k=1}^n M_k\right\|_2 \geq t\right) \leq\left(d_1+d_2\right) \exp \left(-\frac{t^2 / 2}{\sigma_n^2+R_n t / 3}\right) .
        $$
        Let $M_i = \frac{T_i}{\pi(X_i)}\left(\Phi\left(X_{i}\right)-\bar{\Phi}\right)$.   We derive that
        $$
        \begin{aligned}
        & E\left(M_i\right) = E\left(\frac{T_i}{\pi(X_i)}\left(\Phi\left(X_{i}\right)-\bar{\Phi}\right)\right) = E\left(E\left(\frac{T_i}{\pi(X_i)}\left(\Phi\left(X_{i}\right)-\bar{\Phi}\right) \mid X_i\right)\right) = 0, \\
        & ||M_i||_2 = ||\frac{T_i}{\pi(X_i)}\left(\Phi\left(X_{i}\right)-\bar{\Phi}\right)||_2 \leq ||\frac{T_i}{\pi(X_i)}||_2 ||\left(\Phi\left(X_{i}\right)-\bar{\Phi}\right)||_2 \leq \frac{C_2}{\eta^{1/2}} n^{1/2}, \\
        & E\left(||\sum_{i=1}^nM_iM_i^{\top}||_2\right) = E\left(||\sum_{i=1}^n\frac{T_i}{\pi(X_i)^2} \left(\Phi\left(X_{i}\right)-\bar{\Phi}\right) \left(\Phi\left(X_{i}\right)-\bar{\Phi}\right)^{\top}||_2\right) \\
        & \leq \sup_{x \in \mathcal{X}} \frac{T_i}{\pi(X_i)^2} E\left( ||\sum_{i=1}^n\left(\Phi\left(X_{i}\right)-\bar{\Phi}\right) \left(\Phi\left(X_{i}\right)-\bar{\Phi}\right)^{\top}||_2\right) =  \frac{n}{\eta^2} E\left(||\left(\Phi\left(X_{i}\right)-\bar{\Phi}\right) \left(\Phi\left(X_{i}\right)-\bar{\Phi}\right)^{\top}||_2\right) \\
        & \leq \frac{C_2n}{\eta^2}, \\
        & E\left(||\sum_{i=1}^nM_i^{\top}M_i||_2\right) = E\left(||\sum_{i=1}^n\frac{T_i}{\pi(X_i)^2} \left(\Phi\left(X_{i}\right)-\bar{\Phi}\right)^{\top} \left(\Phi\left(X_{i}\right)-\bar{\Phi}\right)||_2\right) \\
        & \leq \frac{n}{\eta^2} E\left(||\left(\Phi\left(X_{i}\right)-\bar{\Phi}\right)^{\top} \left(\Phi\left(X_{i}\right)-\bar{\Phi}\right)||_2\right) \leq \frac{C_2 n^{3/2}}{\eta^2}.
        \end{aligned}
        $$
        Therefore,
        $$
        \sigma_n^2=\max \left\{\left\|\sum_{i=1}^n E\left(M_i M_i^{\top}\right)\right\|_2,\left\|\sum_{i=1}^n E\left(M_i^{\top} M_i\right)\right\|_2\right\} = \frac{C_2 n^{3/2}}{\eta^2},
        $$
        and
        $$
        \operatorname{pr}\left(\left\|\sum_{i=1}^n M_i\right\|_2 \geq t\right) \leq\left(1+ K \right) \exp \left(-\frac{t^2 / 2}{\frac{C_2 n^{3/2}}{\eta^2}+C_2n^{1/2} t / 3}\right) .
        $$
  Letting $t = O_p(n^{ 3/4 + \epsilon})$ for arbitrary $\epsilon > 0$, we have
        $$
        \begin{aligned}
        & ||\sum_{i=1}^{n}T_i\tilde{f}^{*'}\left(\theta_{true}^{\top} W^{1 / 2}\left(\Phi\left(X_{i}\right)-\bar{\Phi}\right)\right)\left(\Phi\left(X_{i}\right)-\bar{\Phi}\right)||_2 \\
        = & ||\sum_{i=1}^{n}\frac{T_i}{\pi(X_i)}\left(\Phi\left(X_{i}\right)-\bar{\Phi}\right)||_2 \leq C_3 n.
        \end{aligned}
        $$

        We have
        $$
        \begin{aligned}
         &  \sum_{i=1}^{n}T_i\tilde{f}^{*''}\left({\theta}_{true}^{\top} W^{1 / 2}\left(\Phi\left(X_{i}\right)-\bar{\Phi}\right)\right)(\left(\Phi\left(X_{i}\right)-\bar{\Phi}\right)^{\top}\Delta)^2 \\
         \geq & \frac{1}{\eta} \sum_{i=1}^{n}(\frac{T_i}{\pi(X_i)}\left(\Phi\left(X_{i}\right)-\bar{\Phi}\right)^{\top}\Delta)^2\\
         = & \frac{1}{\eta} n \Delta^{\top} \frac{1}{n} \sum_{i=1}^{n}(\frac{T_i}{\pi(X_i)}\left(\Phi\left(X_{i}\right)-\bar{\Phi}\right))(\frac{T_i}{\pi(X_i)}\left(\Phi\left(X_{i}\right)-\bar{\Phi}\right))^{\top} \Delta.
        \end{aligned}
        $$

      According to Section 3.3.1 in \cite{lei2018asymptotics}, the smallest eigenvalue $\lambda_{1}$ of matrix $\frac{1}{n} \sum_{i=1}^{n} M_i M_i^\top$ satisfies  that $\lambda_{1} = O_p(\frac{1}{\log(n)^{\gamma}})$ for some $\gamma > 0$. Then,
        \begin{equation}
        n \Delta^{\top} \frac{1}{n} \sum_{i=1}^{n}(\frac{T_i}{\pi(X_i)}\left(\Phi\left(X_{i}\right)-\bar{\Phi}\right))(\frac{T_i}{\pi(X_i)}\left(\Phi\left(X_{i}\right)-\bar{\Phi}\right))^{\top} \Delta \geq n\lambda_{1} ||\Delta||^{2}_2 = O_p(\frac{n}{\log(n)^{\gamma}}) ||\Delta||^{2}_2.
        \label{eq:2}
        \end{equation}

        Finally, we have
        \begin{equation}
        \begin{aligned}
        & \frac{1}{6} \sum_{i=1}^{n}T_i\tilde{f}^{*'''}\left(\tilde{\theta}_{true}^{\top} W^{1 / 2}\left(\Phi\left(X_{i}\right)-\bar{\Phi}\right)\right) \left(\Delta^{\top}\left(\Phi\left(X_{i}\right)-\bar{\Phi}\right) \right)^3 \\
        \geq & - \frac{1}{6} \sum_{i=1}^{n}T_i\tilde{f}^{*'''}\left(\tilde{\theta}_{true}^{\top} W^{1 / 2}\left(\Phi\left(X_{i}\right)-\bar{\Phi}\right)\right) ||\Delta||^{3}_2 ||\left(\Phi\left(X_{i}\right)-\bar{\Phi}\right)||^{3}_2  \\
        \geq & - \frac{1-\eta}{6} ||\Delta||^{3}_2 \sum_{i=1}^{n} ||\left(\Phi\left(X_{i}\right)-\bar{\Phi}\right)||^{3}_2\\
        =& - \frac{1-\eta}{6} ||\Delta||^{3}_2 O_p(K^{3/2} n^{-3/2}).   
        \end{aligned}
        \label{eq:3}
        \end{equation}
    
      Using \eqref{eq:2} and \eqref{eq:3}, we conclude that
        $$
        \begin{aligned}
        & G\left(\theta_{true}+\Delta\right)-G\left(\theta_{true}\right) \\
        \geq & \frac{1}{\eta}n\lambda_{1} ||\Delta||^{2}_2  - (\delta^{\frac{1}{2}} + O_p(n^{ 3/4 + \epsilon}))||\Delta||_2  - \frac{1-\eta}{6}
        ||\Delta||^{3}_2 O_p(K^{3/2} n^{-3/2}) \\
        = & O_p(\frac{n}{\log(n)^{\gamma}})||\Delta||^{2}_2 - O_p(n^{3/4 + \epsilon})||\Delta||_2 - ||\Delta||^{3}_2 \\
        = & O_p(\frac{n^{\frac{1}{2} + 2\epsilon}}{\log(n)^{\gamma}}) - O_p(n^{1/2}) - O_p(n^{-3/4 + 3\epsilon}) \\
        \geq & O_p(n^{ \frac{1}{2} + 1.5\epsilon}) - O_p(n^{1/2}) - O_p(n^{-3/4 + 3\epsilon}). \quad (\text{ since $O_p(n^{0.5\epsilon} / \log(n)^{\gamma} \rightarrow \infty$ for arbitrary $\epsilon > 0$})
        \end{aligned}
        $$
For $\epsilon>0$ such that $\frac{1}{2} + 1.5\epsilon > -3/4 + 3\epsilon$, Lemma 5 is proved.

        Hence, we conclude that
        $$
        \begin{aligned}
        & \sup_{x \in \mathcal{X}}\left|\tilde{f}^{*'}\left(\hat{\theta}^{\top} W^{1 / 2}\left(\Phi\left(X_{i}\right)-\bar{\Phi}\right)\right)-\frac{1}{\pi(X_i)}\right| \\
        = & \sup_{x \in \mathcal{X}}\left|\tilde{f}^{*'}\left(\hat{\theta}^{\top} W^{1 / 2}\left(\Phi\left(X_{i}\right)-\bar{\Phi}\right)\right) - \tilde{f}^{*'}(m^{*}(X_i))\right| \\
        \leq & \sup_{x \in \mathcal{X}}\left|\tilde{f}^{*'}\left(\hat{\theta}^{\top} W^{1 / 2}\left(\Phi\left(X_{i}\right)-\bar{\Phi}\right)\right)-\tilde{f}^{*'}\left(\theta_{true}^{\top} W^{1 / 2}\left(\Phi\left(X_{i}\right)-\bar{\Phi}\right)\right)\right| \\
        + & \sup _{x \in \mathcal{X}}\left|\tilde{f}^{*'}\left(\theta_{true}^{\top} W^{1 / 2}\left(\Phi\left(X_{i}\right)-\bar{\Phi}\right)\right)-\tilde{f}^{*'}(m^{*}(X_i))\right| \\
        = & O\{\sup _{x \in \mathcal{X}}\left|(\theta_{true} - \hat{\theta})^{\top}W^{1 / 2}\left(\Phi\left(X_{i}\right)-\bar{\Phi}\right)\right\}+o_p(1) \\
        \leq & O\left\{\sup _{x \in \mathcal{X}}||(\theta_{true} - \hat{\theta})||_2||W^{1 / 2}\left(\Phi\left(X_{i}\right)-\bar{\Phi}\right)||_2\right\}+o_p(1) \\
        = & O_p\left\{n^{s-\frac{1}{2} + \epsilon}\right\}+o_p(1) \\
        = & o_p(1).
        \end{aligned}
        $$

        Therefore, we prove that  $\sup_{x \in \mathcal{X}}|\hat{w}_{i} - \frac{1}{\pi(X_i)}| = o_p(1)$.

		Following the empirical process arguments  by \cite{wang2020minimal} and \cite{fan2022optimal}, it holds that
		$
		A_{1}=o_p(n^{-1/{2}})
		$. Next,
		$$
		\begin{aligned}
			A_{2}
			&=\sum_{i=1}^{n}T_i\hat{w}^{MB}_{i}\left(\beta^{\top}W^{1 / 2}\Phi\left(X_{i}\right)-\frac{1}{n}\sum_{i=1}^{n}\beta^{\top}W^{1 / 2}\Phi\left(X_{i}\right)\right)\\
			&\leq \|\beta\|_2 \|\hat{w}^{MB}_{i}W^{1 / 2}\left(\Phi\left(X_{i}\right)-\bar{\Phi}\right)\|_2 \\
			&\leq O_p(n^{s+ \gamma + \alpha}) = o_p(n^{-1/2}).
		\end{aligned}
		$$
		The $R_1$ and $R_2$ are regular and asymptotically linear, and they determine the asymptotic expansion of $\hat{\tau}_1-\tau_1$. Similar expansion holds for $\hat{\tau}_0-\tau_0$. Using these asymptotic expansions, it follows  that the semiparametric efficiency bound for ATE estimation is attained.
	\end{proof}

	\section{Additional Information for   Numerical Studies}
	We provide more information about the numerical studies in the main article. In particular, we display the results of MB and MB2 by varying    $(\delta_1,\delta_0)$, where $\delta_1$ and $\delta_0$ are the threshold parameters for the treated and control groups, respectively. The outputs show that both MB and MB2 are not sensitive to the values of  the threshold parameters if they are small enough. The results are summarized in Table \ref{TableS1}.

	\begin{table}[t]
		
		\caption{Performance of MB and MB2 by varying  $(\delta_1,\delta_0)$ in Scenarios A, C, D}
		\label{TableS1}
		\resizebox{\textwidth}{!}{
			\begin{tabular}{cccccccccc}
				\toprule
				Scenario A & \multicolumn{1}{c}{{\textbf{Bias}}} & \multicolumn{1}{c}{\textbf{SD}} & \multicolumn{1}{c}{\textbf{RMSE}} & \multicolumn{1}{c}{\textbf{maxASMD}} & \multicolumn{1}{c}{\textbf{meanASMD}} & \multicolumn{1}{c}{\textbf{medASMD}}& \multicolumn{1}{c}{\textbf{GMIM}}  \\
				
				\midrule
				$\delta_1=\delta_0 = 1$ \\
				{\textbf{MB}}&-0.25&3.16&3.17&0.11&0.01&0.01&0.00\\
				{\textbf{MB2}}&-0.24& 3.16&3.17&0.11&0.01&0.01&0.00\\
				\midrule
				$\delta_1=\delta_0 = 10^{-2}$ \\			{\textbf{MB}}&-0.11&3.21&3.21&0.05&0.00&0.00&0.00\\
				{\textbf{MB2}}&-0.11&3.21&3.21&0.05&0.00&0.00&0.00\\ \hline
				$\delta_1=\delta_0 =  10^{-4}$ \\
				{\textbf{MB}}&-0.10&3.22&3.22&0.03&0.00&0.00&0.00\\
				{\textbf{MB2}}&-0.10& 3.22&3.22&0.03&0.00&0.00&0.00\\
				\midrule
				$\delta_1=\delta_0 =  10^{-6}$ \\			{\textbf{MB}}&-0.10&3.22&3.22&0.03&0.00&0.00&0.00\\
				{\textbf{MB2}}&-0.10& 3.22&3.22&0.03&0.00&0.00&0.00\\
				\midrule
				Scenario C &  \multicolumn{1}{c}{{\textbf{Bias}}} & \multicolumn{1}{c}{\textbf{SD}} & \multicolumn{1}{c}{\textbf{RMSE}} & \multicolumn{1}{c}{\textbf{maxASMD}} & \multicolumn{1}{c}{\textbf{meanASMD}} & \multicolumn{1}{c}{\textbf{medASMD}}& \multicolumn{1}{c}{\textbf{GMIM}}  \\

				\midrule
				$\delta_1=\delta_0 =  10^{-2}$
				\\
				{\textbf{MB}}&0.26&0.81&0.85&0.39&0.02&0.01&0.02\\
				{\textbf{MB2}}&0.37&0.86&0.93&0.30&0.03&0.01&0.03\\
				\midrule
				$\delta_1=\delta_0 = 10^{-4}$ \\
				{\textbf{MB}}&0.21&0.80&0.83&0.36&0.02&0.00&0.02\\
				{\textbf{MB2}}&0.26&0.85&0.89&0.27&0.02&0.00&0.02\\
				\midrule
				Scenario D	& \multicolumn{1}{c}{{\textbf{Bias}}} & \multicolumn{1}{c}{\textbf{SD}} & \multicolumn{1}{c}{\textbf{RMSE}} & \multicolumn{1}{c}{\textbf{maxASMD}} &\multicolumn{1}{c}{\textbf{meanASMD}}& \multicolumn{1}{c}{\textbf{medASMD}} &
				\multicolumn{1}{c}{\textbf{GMIM}} \\
				\midrule
				$\delta_1=\delta_0 = 10^{-2}$
				\\
				{\textbf{MB}}&-0.24&0.46 &0.52&0.15&0.01&0.00&0.01\\
				{\textbf{MB2}}&-0.21&0.45 &0.49&0.17&0.01&0.00&0.00\\
				\midrule
				$\delta_1=\delta_0 =  10^{-4}$
				\\
				{\textbf{MB}}&-0.16&0.46 &0.49&0.13&0.01&0.00&0.00\\
				{\textbf{MB2}}&-0.15&0.45 &0.47&0.17&0.01&0.00&0.00\\
				\bottomrule
			\end{tabular}
		}
	\end{table}
	
		\section{Alternative formulations of Mahalanobis balancing}\label{sec3.6}
	
	We discuss the formulation of Mahalanobis balancing when the normalization constraint is added to Problem (5).   Using the entropy function as the loss function, the optimization problem is
	\begin{equation*}
		\begin{array}{ll}
			\underset{w \in \mathbb{R}^{n_1}}{\text{minimize}}:& \ \ \sum_{i=1}^{n}T_iw_i\log(w_i)\\
			\text{subject to:}&\left\{
			\begin{array}{lll}
				w_i\geq 0,\ \  i\in\{j: T_j=1\}; \\
				\sum_{i=1}^{n}T_i w_i=1;\\
				\sum_{i=1}^{n}T_i\{w_i\Phi(X_i)-\bar{\Phi}\}^{\top}{W}\sum_{i=1}^{n}T_i\{w_i\Phi(X_i)-\bar{\Phi}\} \leq \delta.
			\end{array}
			\right.
		\end{array}
	\end{equation*}
	By  the Fenchel duality theory, the corresponding dual problem is
	\begin{equation}
		\underset{\theta \in \mathbb{R}^K}{\text{minimize}}:
		\ -\log\left[\sum_{i=1}^{n}T_i\exp\left\{\theta^\top {W}^{1/2}\Phi(X_i)\right\}\right]+\theta^\top {W}^{1/2}\bar{\Phi}+\sqrt{\delta}\lVert \theta\rVert_2.\label{dualnormalized}
	\end{equation}
	
	The estimated balanced weight for subject $i$ in the treated group is
	$$\hat{w}_i^{MB}=\frac{\exp\left\{\hat{\theta}^\top {W}^{1/2}\Phi(X_i)\right\}}{\sum_{j:T_j=1}\exp\left\{\hat{\theta}^\top {W}^{1/2}\Phi(X_j)\right\}}, \ \ \text{for all}\ \ i\in\{j: T_j=1\},$$
	where $\hat{\theta}$ is solved from the dual problem. When $\delta=0$,  the dual problem  \eqref{dualnormalized}  reduces to that by entropy balancing \citep{hainmueller2012entropy,zhao2016entropy}. Therefore, Mahalanobis balancing with the normalization constraint is a direct generalization of  entropy balancing. The  parameter $\delta$ can be more difficult to tune when the normalization constraint is inserted. 
	
	\bibliographystyle{ecta}
	\small{
		\bibliography{paper-ref}
	}